\newtheorem{theorem}{Theorem}
\newtheorem{remark}{Remark}
\newtheorem{lemma}{Lemma}
\title{Phase-Space Analysis of Elastic Vector Solitons in Flexible Mechanical Metamaterials}
\author{M. H. Duong$^{1}$ and M. J. Reynolds$^{2}$ }
\address{$^{1}$ School of Mathematics, University of Birmingham, Birmingham, UK}
\address{$^{2}$ School of Engineering, University of Birmingham, Birmingham, UK}
\date{}
\begin{document}

\maketitle

\begin{abstract}
The purpose of this paper is to propose a revised continuum model from the discrete system introduced in \cite{deng2017elastic}. Using a Galilean transformation, we obtain an equation governing the soliton solutions in the phase plane - a second-order nonlinear ODE related to the Klein-Gordon equation with quadratic nonlinearity. These admit the well-known $\mathrm{sech}^2$ solutions, which we employ as an ans\"atz following \cite{deng2017elastic}. The resulting analysis yields soliton amplitudes and velocities that agree closely with numerical simulations, achieving an improvement of exactly 1/9 relative to the benchmark reported by the Harvard group.
\end{abstract}

\section{Introduction}


Metamaterials are artificially structured materials engineered to exhibit electromagnetic, acoustic, or mechanical properties that are not found in nature. Their remarkable behavior arises from subwavelength-scale structural design rather than chemical composition, allowing precise control of wave propagation and interaction. Mathematical modeling plays a fundamental role in understanding and predicting these effective macroscopic behaviors. Using homogenization theory and effective medium approximations, researchers derive constitutive parameters—such as effective permittivity, permeability, or stiffness—linking microscopic geometry to macroscopic response. Classical models rely on asymptotic and multiscale methods, while advanced formulations account for nonlocal effects, spatial dispersion, and local resonances. For instance, enriched continuum theories such as micropolar and micromorphic models have been developed to capture micro-rotational and gradient effects in mechanical metamaterials \cite{sridhar2016homogenization}. Nonlocal homogenization approaches extend these frameworks to include frequency- and wavevector-dependent constitutive relations, explaining phenomena such as artificial chirality and negative refractive indices \cite{ciattoni2015nonlocal}. More recent efforts integrate numerical homogenization, optimization algorithms, and data-driven design to model tunable, reconfigurable, and programmable metamaterials with dynamic control of wave and field behavior \cite{steckiewicz2022homogenization, ariza2024homogenization}. These mathematical and computational advances provide a rigorous foundation for designing next-generation metamaterials with tailored and multifunctional responses across various physical domains.

In~\cite{deng2017elastic,deng2017supplemental}, the authors analyse a purpose-built structure designed as a nonlinear flexible mechanical metamaterial. Their system comprised a $3 \times 20$ array of unit cells, each consisting of four rigid squares based on the geometric model of Grima and Evans~\cite{grima2000auxetic}. Each square possesses two degrees of freedom: a translational displacement in the $x$-direction and a rotation about the $z$-axis through its centre. The coupling between these degrees of freedom is governed by buckling at an internal angle of $\mathrm{25}^{\circ}$. The squares are interconnected by linear tension/compression and torsional springs, so that the nonlinearity of the system arises solely from the underlying geometry of the metamaterial - in fact the rotational part. The aforementioned papers investigated this system using three complementary approaches:
\begin{itemize}
    \item \textit{Experimental}: impact of the structure by a pendulum, with subsequent monitoring of the response;
    \item \textit{Numerical}: discrete modelling of the system integrated with a fourth-order Runge--Kutta scheme;
    \item \textit{Analytical}: derivation of a continuum approximation, retaining terms up to second order.
\end{itemize}


The present work is mainly concerned with the analytical approach. The purpose of this paper is to derive a revised continuum model from the discrete system introduced in \cite{deng2017elastic}, improving the continuum model obtained there by consistently including all terms up to second order. The resulting limiting system will be a coupled pair of nonlinear partial differential equations for the displacement and rotation. By employing a Galilean transformation, we derive an equation for the soliton solutions, which is a second-order nonlinear ODE. We then study the phase plane dynamics of the ODE, obtaining explicit solutions in some special cases.

The rest of the paper is organized as follows. In Section \ref{sec: discrete}
we revisit the discrete model introduced in \cite{deng2017elastic}. In Section \ref{sec: continuum limit} we derive a continuum model from the discrete one. In Section \ref{sec: soliton} we study soliton solutions for the continuum model. In Section \ref{sec: numerics}, we present some numerical investigations. A direct comparison with the results in \cite{deng2017elastic} is given in Section \ref{sec: comparison}. Finally, detailed computations are given in the appendix.
\section{Discrete model}
\label{sec: discrete}
We revisit the discrete model for soft-architected materials introduced in \cite{deng2017elastic}. We briefly recall the derivation of the governing discrete equations here for further analysis later and refer to \cite{deng2017elastic} for the detailed construction of the model.  

The structure consists of a network of square domains connected by thin ligaments, all fabricated from elastomeric material (polydimethylsiloxane — PDMS). The squares have diagonal lengths of $2l$ and are rotated by an angle $\theta_0$ relative to the undeformed diamond (rhombic) configuration. The propagation of plane waves along the $x$-direction is investigated.

To efficiently model the system, it is observed that when a planar wave propagates through the structure, all deformation localizes at the hinges, which bend in-plane and induce pronounced rotations of the squares. Consequently, the structure can be modeled as a network of rigid squares connected by springs at their vertices. Specifically, each hinge is represented by two linear springs: (i) a compression/tension spring with stiffness $k$, and (ii) a torsional spring with stiffness $k_\theta$.

Furthermore, when a planar wave propagates in the $x$-direction, the following behaviors are noted: (i) the squares do not move in the $y$-direction; (ii) vertically aligned neighboring squares undergo the same horizontal displacement and rotate by the same amount but in opposite directions; and (iii) neighboring squares always rotate in opposite directions. Since the focus is on the propagation of planar waves in the $x$-direction, each rigid square in the discrete model possesses two degrees of freedom: displacement in the $x$-direction, $u$, and rotation about the $z$-axis, $\theta$. Focusing on the rigid $[i,j]$-th square, the following relations hold:

$$
u_{i,j} = u_{i+1,j}, \quad \theta_{i,j} = \theta_{i+1,j}. 
$$
The normalized displacement $
U_{i,j} = \frac{u_{i,j}}{2l \cos \theta_0}$, time $T = t \sqrt{\frac{k_\ell p}{m}}$, stiffness $K = \frac{k_\theta}{k l^2}$, and inertia $\alpha = \frac{l p m}{J}$
Since only the displacements and rotations of squares in the $i$-th column play a role, for simplicity the notation is reduced to $U_j = U_{i,j}$ and $\theta_j = \theta_{i,j}$.
For completeness, we reproduce the full derivation of the discrete models from Newtonian dynamics in \cite{deng2017elastic} in the appendix. 

The resulting governing equations in a dimensionless form is given by
\begin{subequations}
    \begin{align}
\frac{\partial^2 U_j}{\partial T^2}&=U_{j+1}-2U_j+U_{j-1}-\frac{1}{2\cos(\theta_0)}\big[\cos(\theta_{j+1}+\theta_0)-\cos(\theta_{j-1}+\theta_0)+K(\theta_{j+1}-\theta_{j-1})\sin(\theta_j+\theta_0)\big]\label{eq: Uj},\\
\frac{\partial^2 \theta_j}{\partial T^2}&=\alpha^2\Bigg\{ -K(\theta_{j+1}+6\theta_j+\theta_{j-1})-2(U_{j+1}-U_{j-1})\cos(\theta_0)\sin(\theta_j+\theta_0)\notag\\
&\qquad +\sin(\theta_j+\theta_0)\Big[\cos(\theta_{j+1}+\theta_0)+6\cos(\theta_j+\theta_0)+\cos(\theta_{j-1}+\theta_0)-8\cos(\theta_0)\Big]\notag\\
&\qquad +\cos(\theta_j+\theta_0)\Big[\sin(\theta_{j+1}+\theta_0)+\sin(\theta_{j-1}+\theta_0)-2\sin(\theta_j+\theta_0)\Big]\Bigg\}.\label{eq: thetaj}
\end{align}
\end{subequations}
In the next section, we will derive the continuum limit from the above discrete system.
\section{Derivation of the continuum limit}
\label{sec: continuum limit}
We consider two continuous functions $U(X)$ and $\theta(X)$ that interpolate the discrete values $U_j$ and $\theta_j$ 
\[
U(X_j)=U_j,\quad \theta(X_j)=\theta_j.
\]
We use the following approximation of the first and second derivatives
\begin{subequations}
\label{eq: derivative approximation}
\begin{align}
\frac{\partial U}{\partial X}(X_j)&\approx \frac{1}{2}[U(X_{j+1})-U(X_{j-1})]=\frac{1}{2}(U_{j+1}-U_{j-1}),\label{eq: U1}
\\\frac{\partial^2 U}{\partial X^2}(X_j)&\approx U(X_{j+1})-2U(X_j)+U(X_{j-1})=U_{j+1}-2U_j+U_{j-1},\label{eq: U2}\\
\frac{\partial \theta}{\partial X}(X_j)&\approx \frac{1}{2}[\theta(X_{j+1})-\theta(X_{j-1}))]=\frac{1}{2}(\theta_{j+1}-\theta_{j-1}),\label{eq:theta1}
\\\frac{\partial^2 \theta}{\partial X^2}(X_j)&\approx \theta(X_{j+1})-2\theta(X_j)+\theta(X_{j-1})=\theta_{j+1}-2\theta_j+\theta_{j-1}.\label{eq:theta2}
\end{align}    
\end{subequations}

We also use the following approximations for $\sin(z)$ and $\cos(z)$ functions for small $z$
\[
\sin(z)\approx z\quad\text{and}\quad \cos(z)\approx 1-\frac{z^2}{2}.
\]
We assume that the rotation angle $\theta_j$ is small, to get
\begin{align}
\sin(\theta_j+\theta_0)&=\cos(\theta_j)\sin(\theta_0)+\sin(\theta_j)\cos(\theta_0)\approx \Big(1-\frac{\theta_j^2}{2}\Big)\sin(\theta_0)+\theta_j\cos(\theta_0),\label{eq: approx sin}\\
\cos(\theta_j+\theta_0)&=\cos(\theta_j)\cos(\theta_0)-\sin(\theta_j)\sin(\theta_0)\approx \Big(1-\frac{\theta_j^2}{2}\Big)\cos(\theta_0)-\theta_j\sin(\theta_0).\label{eq: approx cos}
\end{align}
We now apply the approximations to the right hand side expressions in the discrete equations \eqref{eq: Uj}-\eqref{eq: thetaj} in order to derive the corresponding continuum limits. 
\subsection{Continuum equation for the displacement}
We start with \eqref{eq: Uj} to derive the continuum equation for the displacement. We consider the first two terms inside the square bracket:
\begin{align}
\cos(\theta_{j+1}+\theta_0)-\cos(\theta_{j-1}+\theta_0)&\approx \Big(1-\frac{\theta_{j+1}^2}{2}\Big)\cos(\theta_0)-\theta_{j+1}\sin(\theta_0)-\Big(1-\frac{\theta_{j-1}^2}{2}\Big)\cos(\theta_0)+\theta_{j-1}\sin(\theta_0)\notag
 \\&=-\frac{1}{2}(\theta_{j+1}^2-\theta_{j-1}^2)\cos(\theta_0)-(\theta_{j+1}-\theta_{j-1})\sin(\theta_0).\label{eq: diff1}
\end{align}
To proceed, we approximate the term $\theta_{j+1}^2-\theta_{j-1}^2$. By multiplying \eqref{eq:theta1} then adding with \eqref{eq:theta2} we obtain the following approximation
\begin{equation}
\label{eq: approx-sq1}
\theta_{j+1}\approx \theta_j+\frac{\partial\theta}{\partial X}(X_j)+\frac{1}{2}\frac{\partial^2\theta}{\partial X^2}(X_j).
\end{equation}
Similarly, by multiplying \eqref{eq:theta1} then subtracting \eqref{eq:theta2}, we get
\begin{equation}
\label{eq: approx-sq2}
\theta_{j-1}\approx \theta_{j}-\frac{\partial\theta}{\partial X}(X_j)+\frac{1}{2}\frac{\partial^2\theta}{\partial X^2}(X_j).
\end{equation}
Thus, using the simple identity $(a+b)^2-(a-b)^2=4 ab$, we have
\begin{align}
  \theta_{j+1}^2-\theta_{j-1}^2&\approx \Big(\theta_j+\frac{\partial\theta}{\partial X}(X_j)+\frac{1}{2}\frac{\partial^2\theta}{\partial X^2}(X_j)\Big)^2-\Big(\theta_{j}-\frac{\partial\theta}{\partial X}(X_j)+\frac{1}{2}\frac{\partial^2\theta}{\partial X^2}(X_j)\Big)^2 \notag
  \\&= 4\Big(\theta(X_j)+\frac{1}{2}\frac{\partial^2\theta}{\partial X^2}(X_j)\Big)\frac{\partial\theta}{\partial X}(X_j)\notag
  \\&=4\theta(X_j)\frac{\partial\theta}{\partial X}(X_j)+2\frac{\partial^2\theta}{\partial X^2}(X_j)\frac{\partial\theta}{\partial X}(X_j)\notag
  \\&=2\frac{\partial\theta^2}{\partial X}(X_j)+\frac{\partial}{\partial X}\Big[\Big(\frac{\partial\theta}{\partial X}\Big)^2\Big](X_j).\label{eq: diff2}
\end{align}
Substituting \eqref{eq:theta1} and \eqref{eq: diff2} to \eqref{eq: diff1} we have
\begin{align}
\cos(\theta_{j+1}+\theta_0)-\cos(\theta_{j-1}+\theta_0)\approx -\frac{1}{2}\Bigg(2\frac{\partial\theta^2}{\partial X}(X_j)+\frac{\partial}{\partial X}\Big[\Big(\frac{\partial\theta}{\partial X}\Big)^2\Big](X_j)\Bigg)\cos(\theta_0)-2\sin(\theta_0)\frac{\partial\theta}{\partial X}(X_j).\label{eq: diff3}
\end{align}
Next, for the second term inside the square bracket of the right-hand side of \eqref{eq: Uj}, and by keeping only the dominant terms, we have
\begin{align}
(\theta_{j+1}-\theta_{j-1})\sin(\theta_{j}+\theta_0)&\approx (\theta_{j+1}-\theta_{j-1})\Big((1-\frac{\theta_j^2}{2})\sin(\theta_0)+\theta_j\cos(\theta_0)\Big)\notag
\\& \approx 2\frac{\partial \theta}{\partial X}(X_j)\Big((1-\frac{1}{2}\theta(X_j)^2)\sin(\theta_0)+\theta(X_j)\cos(\theta_0)\Big)\notag
\\&= 2\frac{\partial \theta}{\partial X}(X_j)\sin(\theta_0)+\frac{\partial\theta^2}{\partial X}(X_j)\cos(\theta_0)-\frac{1}{3}\frac{\partial\theta^3}{\partial X}(X_j)\sin(\theta_0)\notag
\\&\approx 2\frac{\partial \theta}{\partial X}(X_j)\sin(\theta_0)+\frac{\partial\theta^2}{\partial X}(X_j)\cos(\theta_0).\label{eq: diff4}
\end{align}
Substituting \eqref{eq: U2}, \eqref{eq: diff3} and \eqref{eq: diff4} into \eqref{eq: Uj}, we get
\begin{align*}
\frac{\partial^2 U}{\partial T^2}(X_j)&\approx  \frac{\partial^2 U}{\partial X^2}(X_j)-\frac{1}{2\cos(\theta_0)}\Bigg\{-\Big(\frac{\partial\theta^2}{\partial X}(X_j)+\frac{1}{2}\frac{\partial}{\partial X}\Big[\Big(\frac{\partial\theta}{\partial X}\Big)^2\Big](X_j)\Big)\cos(\theta_0)-2\sin(\theta_0)\frac{\partial\theta}{\partial X}(X_j)
\\&\quad + 2K\sin(\theta_0)\frac{\partial \theta}{\partial X}(X_j)+K\frac{\partial\theta^2}{\partial X}(X_j)\cos(\theta_0)\Bigg\}
\\&=\frac{\partial^2 U}{\partial X^2}(X_j)+(1-K)\tan(\theta_0)\frac{\partial \theta}{\partial X}(X_j)+\frac{1}{2}(1-K)\frac{\partial\theta^2}{\partial X}(X_j)+\frac{1}{4}\frac{\partial}{\partial X}\Big[\Big(\frac{\partial\theta}{\partial X}\Big)^2\Big](X_j).
\end{align*}
This leads to the following continuum equation for the displacement $U=U(T,X)$:
\begin{equation}
 \frac{\partial^2 U}{\partial T^2}(X)= \frac{\partial^2 U}{\partial X^2}(X)+(1-K)\tan(\theta_0)\frac{\partial \theta}{\partial X}(X)+\frac{1}{2}(1-K)\frac{\partial\theta^2}{\partial X}(X)+\frac{1}{4}\frac{\partial}{\partial X}\Big[\Big(\frac{\partial\theta}{\partial X}\Big)^2\Big](X).  \label{eq: continuum U}
\end{equation}
Remark: if we keep the remaining term in \eqref{eq: diff4} then \eqref{eq: continuum U} becomes
\[
 \frac{\partial^2 U}{\partial T^2}(X)= \frac{\partial^2 U}{\partial X^2}(X)+(1-K)\tan(\theta_0)\frac{\partial \theta}{\partial X}(X)+\frac{1}{2}(1-K)\frac{\partial\theta^2}{\partial X}(X)+\frac{1}{4}\frac{\partial}{\partial X}\Big[\Big(\frac{\partial\theta}{\partial X}\Big)^2\Big](X)+\frac{1}{6}K\tan(\theta_0)\frac{\partial \theta^3}{\partial X}.
\]
\subsection{Continuum equation for the rotation}
We denote
\begin{align}
 A&:= -K(\theta_{j+1}+6\theta_j+\theta_{j-1}),\\
 B&:=-2(U_{j+1}-U_{j-1})\cos(\theta_0)\sin(\theta_j+\theta_0),\\
 C&:=\sin(\theta_j+\theta_0)\Big[\cos(\theta_{j+1}+\theta_0)+6\cos(\theta_j+\theta_0)+\cos(\theta_{j-1}+\theta_0)-8\cos(\theta_0)\Big],\\
 D&:=\cos(\theta_j+\theta_0)\Big[\sin(\theta_{j+1}+\theta_0)+\sin(\theta_{j-1}+\theta_0)-2\sin(\theta_j+\theta_0)\Big].
\end{align}
so that the discrete equation for the rotations \eqref{eq: thetaj} can be written as
\begin{equation}
\label{eq: thetaj2}
\frac{\partial^2\theta_j}{\partial T^2}=\alpha^2(A+B+C+D).    
\end{equation}
i) For the term $A$, using the approximation \eqref{eq:theta2}, we have
\begin{align}
A&=-K(\theta_{j+1}+6\theta_j+\theta_{j-1}) \notag\\&= -K(\theta_{j+1}-2\theta_j+\theta_{j-1}+8\theta_j)\notag
\\&\approx -K\Big[\frac{\partial^2\theta}{\partial X^2}(X_j)+8\theta(X_j)\Big].\label{eq: A}
\end{align}
ii) For the term $B$, applying the approximations \eqref{eq: U2} and \eqref{eq: approx sin}, we get
\begin{align}
 B&\approx -4\frac{\partial U}{\partial X}(X_j)\cos(\theta_0)\Big[ \Big(1-\frac{\theta_j^2}{2}\Big)\sin(\theta_0)+\theta_j\cos(\theta_0)\Big]\notag
 \\&=-4\frac{\partial U}{\partial X}(X_j)\cos(\theta_0)\sin(\theta_0)-4\frac{\partial U}{\partial X}(X_j)\theta_j\cos(\theta_0)^2+2\frac{\partial U}{\partial X}(X_j)\theta_j^2\cos(\theta_0)\sin(\theta_0)\notag
\\& \approx -2\frac{\partial U}{\partial X}(X_j)\sin(2\theta_0)-4\cos(\theta_0)^2\,\theta_j\,\frac{\partial U}{\partial X}(X_j).\label{eq: B}
\end{align}
iii) We proceed to $C$. For the term inside the square bracket, from \eqref{eq: approx cos} we have
\begin{align}
\label{eq: C0}
&\cos(\theta_{j+1}+\theta_0)+6\cos(\theta_j+\theta_0)+\cos(\theta_{j-1}+\theta_0)-8\cos(\theta_0)\notag
\\&\qquad\approx  \Big(1-\frac{\theta_{j+1}^2}{2}\Big)\cos(\theta_0)-\theta_{j+1}\sin(\theta_0)+6\Big[\Big(1-\frac{\theta_j^2}{2}\Big)\cos(\theta_0)-\theta_j\sin(\theta_0)\Big]\notag
\\&\qquad\qquad +\Big(1-\frac{\theta_{j-1}^2}{2}\Big)\cos(\theta_0)-\theta_{j-1}\sin(\theta_0)-8\cos(\theta_0)\notag
\\&=-\frac{1}{2}\cos(\theta_0)(\theta_{j+1}^2+6\theta_{j}^2+\theta_{j-1}^2)-\sin(\theta_0)(\theta_{j+1}+6\theta_j+\theta_{j-1}).
\end{align}
We now approximate the term $\theta_{j+1}^2+6\theta_j^2+\theta_{j-1}^2$ appearing in the last expression. To this end, using the relations \eqref{eq: approx-sq1}-\eqref{eq: approx-sq2}, and the elementary identity $(a+b)^2+(a-b)^2=2(a^2+b^2)$, we have
\begin{align}
\theta_{j+1}^2+6\theta_j^2+\theta_{j-1}^2&\approx\Big(\theta_j+\frac{\partial\theta}{\partial X}(X_j)+\frac{1}{2}\frac{\partial^2\theta}{\partial X^2}(X_j)\Big)^2+  6\theta_j^2+  \Big(\theta_j-\frac{\partial\theta}{\partial X}(X_j)+\frac{1}{2}\frac{\partial^2\theta}{\partial X^2}(X_j)\Big)^2\notag
\\&=2\Big[\Big(\theta_j+\frac{1}{2}\frac{\partial^2\theta}{\partial X^2}(X_j)\Big)^2+\Big(\frac{\partial\theta}{\partial X}(X_j)\Big)^2\Big]+6\theta_j^2\notag
\\&=8\theta_j^2+2\theta_j \frac{\partial^2\theta}{\partial X^2}(X_j)+\frac{1}{2}\Big(\frac{\partial^2\theta}{\partial X^2}(X_j)\Big)^2+2\Big(\frac{\partial\theta}{\partial X}(X_j)\Big)^2\notag\\
&=8\theta(X_j)^2+2\theta(X_j) \frac{\partial^2\theta}{\partial X^2}(X_j)+\frac{1}{2}\Big(\frac{\partial^2\theta}{\partial X^2}(X_j)\Big)^2+2\Big(\frac{\partial\theta}{\partial X}(X_j)\Big)^2.\label{eq: C1}
\end{align}
Next, as in term $A$, we have
\begin{align}
\theta_{j+1}+6\theta_j+\theta_{j-1}\approx \frac{\partial^2\theta}{\partial X^2}(X_j)+8\theta(X_j).  \label{eq: C2}
\end{align}
Substituting \eqref{eq: C1} and \eqref{eq: C2} into \eqref{eq: C0} we get
\begin{align}
&\cos(\theta_{j+1}+\theta_0)+6\cos(\theta_j+\theta_0)+\cos(\theta_{j-1}+\theta_0)-8\cos(\theta_0)\notag
\\&\quad\approx -\frac{1}{2}\cos(\theta_0)\Big[8\theta(X_j)^2+2\theta(X_j) \frac{\partial^2\theta}{\partial X^2}(X_j)+\frac{1}{2}\Big(\frac{\partial^2\theta}{\partial X^2}(X_j)\Big)^2+2\Big(\frac{\partial\theta}{\partial X}(X_j)\Big)^2\Big]-\sin(\theta_0)\Big[\frac{\partial^2\theta}{\partial X^2}(X_j)+8\theta(X_j)\Big]\notag
\\&\quad=-\cos(\theta_0)\Big[4\theta(X_j)^2+\theta(X_j) \frac{\partial^2\theta}{\partial X^2}(X_j)+\frac{1}{4}\Big(\frac{\partial^2\theta}{\partial X^2}(X_j)\Big)^2+\Big(\frac{\partial\theta}{\partial X}(X_j)\Big)^2\Big]-\sin(\theta_0)\Big[\frac{\partial^2\theta}{\partial X^2}(X_j)+8\theta(X_j)\Big].\label{eq: C3}
\end{align}
Substituting \eqref{eq: approx sin} and \eqref{eq: C3} into $C$ we get
\begin{align}
C&\approx\Big[\Big(1-\frac{\theta_j^2}{2}\Big)\sin(\theta_0)+\theta_j\cos(\theta_0)\Big]\bigg\{-\cos(\theta_0)\Big[4\theta(X_j)^2+\theta(X_j) \frac{\partial^2\theta}{\partial X^2}(X_j)+\frac{1}{4}\Big(\frac{\partial^2\theta}{\partial X^2}(X_j)\Big)^2+\Big(\frac{\partial\theta}{\partial X}(X_j)\Big)^2\Big]\notag
\\& \hspace{6cm}-\sin(\theta_0)\Big[\frac{\partial^2\theta}{\partial X^2}(X_j)+8\theta(X_j)\Big]\bigg\} \notag
\\&\approx -\frac{1}{2}\sin(2\theta_0)\Big[4\theta(X_j)^2+\theta(X_j) \frac{\partial^2\theta}{\partial X^2}(X_j)+\frac{1}{4}\Big(\frac{\partial^2\theta}{\partial X^2}(X_j)\Big)^2+\Big(\frac{\partial\theta}{\partial X}(X_j)\Big)^2\Big]\notag
\\&\qquad -\sin(\theta_0)^2\Big[\frac{\partial^2\theta}{\partial X^2}(X_j)+8\theta(X_j)\Big]-\frac{1}{2}\sin(2\theta_0)\theta(X_j)\Big[\frac{\partial^2\theta}{\partial X^2}(X_j)+8\theta(X_j)\Big]\\
\label{eq: C}
&=-\frac{1}{2}\sin(2\theta_0)\Big[12\theta(X_j)^2+2\theta(X_j) \frac{\partial^2\theta}{\partial X^2}(X_j)+\frac{1}{4}\Big(\frac{\partial^2\theta}{\partial X^2}(X_j)\Big)^2+\Big(\frac{\partial\theta}{\partial X}(X_j)\Big)^2\Big]\notag
\\&\qquad -\sin(\theta_0)^2\Big[\frac{\partial^2\theta}{\partial X^2}(X_j)+8\theta(X_j)\Big].
\end{align}
iv) Next, term $D$. Using \eqref{eq: approx sin} we have
\begin{align}
&\sin(\theta_{j+1}+\theta_0)+\sin(\theta_{j-1}+\theta_0)-2\sin(\theta_j+\theta_0)\notag
\\&\quad \approx \Big(1-\frac{\theta_{j+1}^2}{2}\Big)\sin(\theta_0)+\theta_{j+1}\cos(\theta_0)+\Big(1-\frac{\theta_{j-1}^2}{2}\Big)\sin(\theta_0)+\theta_{j-1}\cos(\theta_0)-2\Big[\Big(1-\frac{\theta_j^2}{2}\Big)\sin(\theta_0)+\theta_j\cos(\theta_0)\Big]\notag
\\&\quad =-\frac{1}{2}\sin(\theta_0)(\theta_{j+1}^2-2\theta_j^2+\theta_{j-1}^2)+\cos(\theta_0)(\theta_{j+1}-2\theta_j+\theta_{j-1}).\label{eq: D1}
\end{align}
Similarly as in \eqref{eq: C1} we have
\begin{align}
\theta_{j+1}^2-2\theta_j^2+\theta_{j-1}^2&\approx\Big(\theta_j+\frac{\partial\theta}{\partial X}(X_j)+\frac{1}{2}\frac{\partial^2\theta}{\partial X^2}(X_j)\Big)^2+  \Big(\theta_j-\frac{\partial\theta}{\partial X}(X_j)+\frac{1}{2}\frac{\partial^2\theta}{\partial X^2}(X_j)\Big)^2-2\theta_j^2\notag
\\&=2\Big[\Big(\theta_j+\frac{1}{2}\frac{\partial^2\theta}{\partial X^2}(X_j)\Big)^2+\Big(\frac{\partial\theta}{\partial X}(X_j)\Big)^2\Big]-2\theta_j^2\notag 
\\&=2\theta_j\frac{\partial^2\theta}{\partial X^2}(X_j)+\frac{1}{2}\Big(\frac{\partial^2\theta}{\partial X^2}(X_j)\Big)^2+2\Big(\frac{\partial\theta}{\partial X}(X_j)\Big)^2\notag
\\&=2\theta(X_j)\frac{\partial^2\theta}{\partial X^2}(X_j)+\frac{1}{2}\Big(\frac{\partial^2\theta}{\partial X^2}(X_j)\Big)^2+2\Big(\frac{\partial\theta}{\partial X}(X_j)\Big)^2.\label{eq: D2}
\end{align}
Substituting \eqref{eq:theta2} and \eqref{eq: D2} into \eqref{eq: D1} yields
\begin{align}
&\sin(\theta_{j+1}+\theta_0)+\sin(\theta_{j-1}+\theta_0)-2\sin(\theta_j+\theta_0)\\
&\approx-\frac{1}{2}\sin(\theta_0)\Big[2\theta(X_j)\frac{\partial^2\theta}{\partial X^2}(X_j)+\frac{1}{2}\Big(\frac{\partial^2\theta}{\partial X^2}(X_j)\Big)^2+2\Big(\frac{\partial\theta}{\partial X}(X_j)\Big)^2\Big]+\cos(\theta_0)\frac{\partial^2\theta}{\partial X^2}(X_j)\notag
\\&=-\sin(\theta_0)\Big[\theta(X_j)\frac{\partial^2\theta}{\partial X^2}(X_j)+\frac{1}{4}\Big(\frac{\partial^2\theta}{\partial X^2}(X_j)\Big)^2+\Big(\frac{\partial\theta}{\partial X}(X_j)\Big)^2\Big]+\cos(\theta_0)\frac{\partial^2\theta}{\partial X^2}(X_j).\label{eq: D3}
\end{align}
Substituting \eqref{eq: approx cos} and \eqref{eq: D3} into \eqref{eq: D1} we get
\begin{align}
D&\approx   \Big(\cos(\theta_0)-\theta_j\sin(\theta_0)-\frac{\theta_j^2}{2}\cos(\theta_0)\Big) \Big\{-\sin(\theta_0)\Big[\theta(X_j)\frac{\partial^2\theta}{\partial X^2}(X_j)+\frac{1}{4}\Big(\frac{\partial^2\theta}{\partial X^2}(X_j)\Big)^2+\Big(\frac{\partial\theta}{\partial X}(X_j)\Big)^2\Big]\notag
\\&\hspace{6cm}+\cos(\theta_0)\frac{\partial^2\theta}{\partial X^2}(X_j)\Big\}\notag
\\&\approx -\frac{1}{2}\sin(2\theta_0)\Big[\theta(X_j)\frac{\partial^2\theta}{\partial X^2}(X_j)+\frac{1}{4}\Big(\frac{\partial^2\theta}{\partial X^2}(X_j)\Big)^2+\Big(\frac{\partial\theta}{\partial X}(X_j)\Big)^2\Big]+\cos(\theta_0)^2\frac{\partial^2\theta}{\partial X^2}(X_j)
\notag\\&\qquad-\frac{1}{2}\theta(X_j)\sin(2\theta_0)\frac{\partial^2\theta}{\partial X^2}(X_j).\label{eq: D}
\end{align}
v) Summing all terms $A, B, C$ and $D$ from \eqref{eq: A}, \eqref{eq: B}, \eqref{eq: C} and \eqref{eq: D} we have
\begin{align}
A+B+C+D&\approx  -K\Big[\frac{\partial^2\theta}{\partial X^2}(X_j)+8\theta(X_j)\Big]-2\frac{\partial U}{\partial X}(X_j)\sin(2\theta_0)-4\cos(\theta_0)^2\,\theta_j\,\frac{\partial U}{\partial X}(X_j)\notag\\
&\qquad -\frac{1}{2}\sin(2\theta_0)\Big[12\theta(X_j)^2+2\theta(X_j) \frac{\partial^2\theta}{\partial X^2}(X_j)+\frac{1}{4}\Big(\frac{\partial^2\theta}{\partial X^2}(X_j)\Big)^2+\Big(\frac{\partial\theta}{\partial X}(X_j)\Big)^2\Big]\notag
\\&\qquad -\sin(\theta_0)^2\Big[\frac{\partial^2\theta}{\partial X^2}(X_j)+8\theta(X_j)\Big]\notag
\\&\qquad-\frac{1}{2}\sin(2\theta_0)\Big[\theta(X_j)\frac{\partial^2\theta}{\partial X^2}(X_j)+\frac{1}{4}\Big(\frac{\partial^2\theta}{\partial X^2}(X_j)\Big)^2+\Big(\frac{\partial\theta}{\partial X}(X_j)\Big)^2\Big]+\cos(\theta_0)^2\frac{\partial^2\theta}{\partial X^2}(X_j)\notag
\\&\qquad-\frac{1}{2}\sin(2\theta_0)\theta(X_j)\frac{\partial^2\theta}{\partial X^2}(X_j)\notag
\\&=(\cos(\theta_0)^2-\sin(\theta_0)^2-K)\frac{\partial^2\theta}{\partial X^2}(X_j)-2\sin(2\theta_0)\frac{\partial U}{\partial X}(X_j)\notag
\\&\qquad-4\theta(X_j)\Big(2K+\cos(\theta_0)^2\frac{\partial U}{\partial X}(X_j)+2\sin(\theta_0)^2\Big)\notag
\\&\qquad-\sin(2\theta_0)\Big[\frac{1}{4}\Big(\frac{\partial^2\theta}{\partial X^2}(X_j)\Big)^2+\Big(\frac{\partial\theta}{\partial X}(X_j)\Big)^2\Big]\notag 
\\&\qquad-\sin(2\theta_0)\Big[2\theta(X_j) \frac{\partial^2\theta}{\partial X^2}(X_j)+6\theta(X_j)^2\Big].
\end{align}
Substituting this into \eqref{eq: thetaj2}, we obtain the continuum approximation for the rotation
\begin{align}
\label{eq: continuum theta}
\frac{\partial^2\theta}{\partial T^2}(X)&=\alpha^2\Bigg\{(\cos(\theta_0)^2-\sin(\theta_0)^2-K)\frac{\partial^2\theta}{\partial X^2}(X)-2\sin(2\theta_0)\frac{\partial U}{\partial X}(X)-4\theta(X)\Big(2K+\cos(\theta_0)^2\frac{\partial U}{\partial X}(X_j)\notag
\\&\quad\qquad+2\sin(\theta_0)^2\Big)-\sin(2\theta_0)\Big[\frac{1}{4}\Big(\frac{\partial^2\theta}{\partial X^2}(X)\Big)^2+\Big(\frac{\partial\theta}{\partial X}(X)\Big)^2\Big]\notag
\\&\qquad\qquad-\sin(2\theta_0)\Big[2\theta(X) \frac{\partial^2\theta}{\partial X^2}(X)+6\theta(X)^2\Big]\Bigg\}.  \end{align}
In conclusion, we obtain the following continuum equations: coupled PDEs, for displacement and rotation:
\begin{subequations}
    \begin{align}
 \frac{\partial^2 U}{\partial T^2}(X)&= \frac{\partial^2 U}{\partial X^2}(X)+(1-K)\tan(\theta_0)\frac{\partial \theta}{\partial X}(X)+\mathbf{\frac{1}{2}(1-K)\frac{\partial\theta^2}{\partial X}(X)+\frac{1}{4}\frac{\partial}{\partial X}\Big[\Big(\frac{\partial\theta}{\partial X}\Big)^2\Big](X)},\label{eq: cU}\\
\frac{\partial^2\theta}{\partial T^2}(X)&=\alpha^2\Bigg\{(\cos(2\theta_0)-K)\frac{\partial^2\theta}{\partial X^2}(X)-2\sin(2\theta_0)\frac{\partial U}{\partial X}(X)-4\theta(X)\Big(2K+\cos(\theta_0)^2\frac{\partial U}{\partial X}(X)+2\sin(\theta_0)^2\Big)\notag
\\&\quad\qquad-4\sin(2\theta_0)\theta(X)^2\mathbf{-\sin(2\theta_0)\Big[\frac{1}{4}\Big(\frac{\partial^2\theta}{\partial X^2}(X)\Big)^2+\Big(\frac{\partial\theta}{\partial X}(X)\Big)^2+2\theta(X) \frac{\partial^2\theta}{\partial X^2}(X)+2\theta(X)^2\Big]\Bigg\}}. \label{eq: ctheta}
\end{align}
\end{subequations}
Note that, compared to \cite{deng2017elastic}\cite{deng2017supplemental}, the bold terms are newly introduced in the present paper. In the next section, we analyse the coupled-system, showing the effects of the new terms.
\section{Soliton solutions}
\label{sec: soliton}
In this section, we seek soliton solutions to the coupled continuum equations obtained in the previous section. We then study qualitatively the properties of the solutions. 

We use a Galilean transformation
\begin{equation}
    \zeta-\zeta_0=X-cT,
\end{equation}
and seek travelling-wave solutions for the coupled system \eqref{eq: cU}-\eqref{eq: ctheta} of the form
\[
U(T,X)=u(\zeta),\quad \theta(T,X)=\eta(\zeta).
\]
Using the chain rule, we have
\begin{align*}
\frac{\partial U}{\partial T}(T,X)&=\frac{du}{d\zeta}(\zeta)\frac{\partial\zeta}{\partial T}=-c\frac{du}{d\zeta}(\zeta),\quad 
\frac{\partial^2 U}{\partial T^2}(T,X)=c^2\frac{d^2u}{d\zeta}(\zeta),\\
\frac{\partial U}{\partial X}(T,X)&=\frac{du}{d\zeta}(\zeta)\frac{\partial\zeta}{\partial X}=\frac{du}{d\zeta}(\zeta),\quad 
\frac{\partial^2 U}{\partial X^2}(T,X)=\frac{d^2u}{d\zeta^2}(\zeta),\\
\frac{\partial \theta}{\partial T}(T,X)&=\frac{d\eta}{d\zeta}(\zeta)\frac{\partial\zeta}{\partial T}=-c\frac{d\eta}{d\zeta}(\zeta),\quad 
\frac{\partial^2 \theta}{\partial T^2}(T,X)=c^2\frac{d^2\eta}{d\zeta}(\zeta),\\
\frac{\partial \theta}{\partial X}(T,X)&=\frac{d\eta}{d\zeta}(\zeta)\frac{\partial\zeta}{\partial X}=\frac{d\eta}{d\zeta}(\zeta),\quad 
\frac{\partial^2 \theta}{\partial X^2}(T,X)=\frac{d^2\eta}{d\zeta^2}(\zeta),\\
\frac{\partial \theta^2}{\partial X}&=2\theta\frac{\partial\theta}{\partial X}=2\eta(\zeta)\frac{d\eta}{d\zeta}(\zeta),\\
\frac{\partial}{\partial X}\Big[\Big(\frac{\partial\theta}{\partial X}\Big)^2\Big](X)&=2\frac{\partial\theta}{\partial X}\frac{\partial^2\theta}{\partial X^2}=2\frac{d\eta}{d\zeta}(\zeta)\frac{d^2\eta}{d\zeta^2}(\zeta).
\end{align*}
Substituting these expressions into \eqref{eq: cU}-\eqref{eq: ctheta}, we obtain
\begin{subequations}
\begin{align}
 \frac{d^2 u}{d\zeta^2}(\zeta)&=-\frac{(1-K)}{1-c^2}\tan(\theta_0)\frac{d\eta}{d\zeta}(\zeta)-\frac{(1-K)}{1-c^2}\eta(\zeta)\frac{d\eta}{d\zeta}(\zeta)-\frac{1}{2(1-c^2)} \frac{d\eta}{d\zeta}(\zeta)\frac{d^2\eta}{d\zeta^2}(\zeta),\label{eq: cu2}\\
c^2\frac{d^2\eta}{d\zeta^2}(\zeta)&=\alpha^2\Big\{(\cos(2\theta_0)-K)\frac{d^2\eta}{d\zeta^2}(\zeta)-2\sin(2\theta_0)\frac{du}{d\zeta}(\zeta)-4\eta(\zeta)\big(2K+\cos(\theta_0)^2\frac{du}{d\zeta}(\zeta)+2\sin(\theta_0)^2\big)\notag
\\&\quad-4\sin(2\theta_0)\eta(\zeta)^2-\sin(2\theta_0)\Big[\frac{1}{4}\Big(\frac{d^2\eta}{d\zeta^2}(\zeta)\Big)^2+\Big(\frac{d\eta}{d\zeta}(\zeta)\Big)^2+2\eta(\zeta)\frac{d^2\eta}{d\zeta^2}(\zeta)+2\eta(\zeta)^2\Big]
\Big\}\label{eq: ceta2}.
 \end{align}  
\end{subequations}
Integrating with respect to $\zeta$ in \eqref{eq: cu2} we get
\begin{equation*}
 \frac{d u}{d\zeta}(\zeta)=-\frac{(1-K)}{1-c^2}\tan(\theta_0)\eta-\frac{(1-K)}{2(1-c^2)}\eta^2-\frac{1}{4(1-c^2)} \Big(\frac{d\eta}{d\zeta}(\zeta)\Big)^2.
\end{equation*}
Thus the evolution of the displacement is fully governed by the rotation. It is therefore crucial to analyse $\eta$. To this end, we derive an equation for $\eta$.

Substituting the above expression to \eqref{eq: ceta2} and ignoring higher-order terms, we obtain
\begin{align}
&\Big(c^2-\alpha^2(\cos(2\theta_0)-K-2\sin(2\theta_0)\eta)\Big)\frac{d^2\eta}{d\zeta^2}(\zeta)\notag
\\&\quad\approx\alpha^2\Big\{\Big[\frac{2(1-K)}{1-c^2}\tan(\theta_0)\sin(2\theta_0)-4(2K+2\sin(\theta_0)^2)\Big]\eta\notag
\\&\qquad +\Big[\frac{1-K}{1-c^2}(\sin(2\theta_0)+4\cos(\theta_0)^2\tan(\theta_0))-6\sin(2\theta_0)\Big]\eta^2\\&\qquad+\Big[\frac{1}{2(1-c^2)}\sin(2\theta_0)-\sin(2\theta_0)\Big]\Big(\frac{d\eta}{d\zeta}(\zeta)\Big)^2-\frac{1}{4}\sin(2\theta_0)\Big(\frac{d^2\eta}{d\zeta^2}(\zeta)\Big)^2\Big\}.\label{eq: ceta3}
\end{align}
We use the following identities
\begin{align*}
\tan(\theta_0)\sin(2\theta_0)&=2\sin^2(\theta_0).
\end{align*}
\begin{align*}
4\cos^2(\theta_0)\tan(\theta_0)&=2\sin(2\theta_0)
\end{align*}
We define
\begin{align*}
 \beta(\eta)&=A\eta+B= \alpha^2(\cos(2\theta_0)-K-2\sin(2\theta_0)\eta-c^2/\alpha^2),
 \\ P&=8\frac{(c^2-K/2-1/2)}{(1-c^2)}\sin^2(\theta_0)-8K
 \\ Q&=6\frac{(c^2-K/2-1/2)}{(1-c^2)}\sin(2\theta_0)\\
 R&=\frac{(c^2-1/2)}{(1-c^2)}\sin(2\theta_0)\\
 S&=-\frac{1}{4}\sin(2\theta_0).
\end{align*}
Then \eqref{eq: ceta3} can be written as
\begin{equation}
\label{eq: ceta4}
\beta(\eta)\eta''+
\alpha^2(P\eta +Q \eta^2+R(\eta')^2+S(\eta'')^2)=0,\quad\text{where}~ \eta'=\frac{d\eta}{d\zeta}(\zeta),~\eta''=\frac{d^2\eta}{d\zeta^2}(\zeta).   \end{equation}
We define
\begin{equation}
 \label{eq: coefficients}   
A=-2\alpha^2\sin(2\theta_0), \quad B=\alpha^2(\cos(2\theta_0)-K)-c^2,
\quad C=\alpha^2 P,\quad D=\alpha^2 Q, \quad E=\alpha^2 R, \quad F=\alpha^2 S. 
\end{equation}
Then \eqref{eq: ceta4} becomes
\begin{equation}
\label{eq: ceta5}
(A \eta+B)\eta''+ C \eta  +D\eta ^2 +E (\eta')^2+F (\eta'')^2=0.
\end{equation}
\subsection{Phase space dynamics}
We consider \eqref{eq: ceta5} in the case of $F=0$, ie the following second-order nonlinear ordinary differential equation:
\begin{equation}
\label{eq: ceta6}
(A \eta + B)\eta'' + C \eta + D\eta^2 + E (\eta')^2 = 0,
\end{equation}
where $\eta = \eta(\zeta)$, and $A, B, C, D, E$ are real constants.\\ 

Our aim in this section is to reduce \eqref{eq: ceta6} to a first-order ODE using an integrating factor method. A similar approach has been investigated in \cite{Jordan2011nonlinear} \cite{Jordan2007sourcebook}. To this end, we will need the following technical lemma.
\begin{lemma}
\label{lem: integration}
For $\sigma\ne -3,-2,-1$ we have
\begin{align*}
&\int (Ax + B)^\sigma (C x + D x^2) \, dx
\\ 
&\qquad=\frac{1}{A^2} \Bigg[
\frac{D}{A(\sigma+3)} (Ax + B)^{\sigma+3}
+ \frac{AC-2DB}{A(\sigma+2)}(Ax + B)^{\sigma+2}
+ \frac{B(DB-AC)}{A(\sigma+1)}(Ax + B)^{\sigma+1}\Bigg].
\end{align*}
\end{lemma}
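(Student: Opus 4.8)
The plan is to prove the integration formula by a straightforward substitution followed by term-by-term integration, using the elementary power rule $\int t^\sigma\,dt = t^{\sigma+1}/(\sigma+1)$, which is exactly what forces the excluded values $\sigma \neq -1,-2,-3$. First I would set $t = Ax+B$, so that $dt = A\,dx$ and $x = (t-B)/A$. The quadratic $Cx + Dx^2$ must then be re-expressed as a polynomial in $t$: substituting gives
\begin{equation*}
Cx + Dx^2 = C\cdot\frac{t-B}{A} + D\cdot\frac{(t-B)^2}{A^2} = \frac{1}{A^2}\Big[D t^2 + (AC - 2DB)\, t + B(DB - AC)\Big].
\end{equation*}
This is the key algebraic rearrangement; I would expand $(t-B)^2 = t^2 - 2Bt + B^2$ carefully and collect the coefficient of each power of $t$, checking that the constant term is $CA B \cdot(-1)/A^2 \cdot A + DB^2/A^2 \cdot 1$ — i.e. $(DB^2 - ACB)/A^2 = B(DB-AC)/A^2$, matching the claimed coefficient.

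Next I would assemble the integral. With $dx = dt/A$, the integral becomes
\begin{equation*}
\int (Ax+B)^\sigma (Cx + Dx^2)\,dx = \frac{1}{A^3}\int t^\sigma \Big[D t^2 + (AC - 2DB) t + B(DB-AC)\Big]\,dt,
\end{equation*}
and then I would integrate each of the three monomials $t^{\sigma+2}$, $t^{\sigma+1}$, $t^\sigma$ using the power rule, which is legitimate precisely because the new exponents $\sigma+3$, $\sigma+2$, $\sigma+1$ are all nonzero under the hypothesis $\sigma \neq -3,-2,-1$. This yields
\begin{equation*}
\frac{1}{A^3}\Big[\frac{D}{\sigma+3} t^{\sigma+3} + \frac{AC-2DB}{\sigma+2} t^{\sigma+2} + \frac{B(DB-AC)}{\sigma+1} t^{\sigma+1}\Big],
\end{equation*}
and substituting $t = Ax+B$ back and factoring $1/A^2$ to the front (the remaining $1/A$ being distributed as $1/(A(\sigma+k))$ in each term) gives exactly the stated right-hand side.

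There is essentially no analytic obstacle here; the proof is a verification. The only thing to be careful about is bookkeeping: tracking the factors of $A$ (one from $dx = dt/A$, two from the $1/A^2$ in the polynomial rewrite) so that the final prefactor is $1/A^2$ with each term carrying an extra $1/A$, and getting the sign of the constant term right. I would also note in passing that the hypothesis $A \neq 0$ is implicit (the substitution and all divisions by $A$ require it), and that the formula is stated as an indefinite integral, so the arbitrary constant is suppressed throughout. If desired, one could verify the result instead by differentiating the right-hand side and collecting terms, but the substitution argument is cleaner and makes the role of the excluded $\sigma$ values transparent.
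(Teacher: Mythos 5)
Your proposal is correct and follows essentially the same route as the paper: the substitution $u = Ax+B$ with $x = (u-B)/A$, expansion of the quadratic, and term-by-term power-rule integration, which is exactly what forces $\sigma \neq -3,-2,-1$. The only cosmetic difference is that you collect the coefficients of $t^2$, $t$, $1$ before integrating (three monomials), while the paper integrates the five expanded terms and then combines them; the bookkeeping of the factors of $A$ and the final coefficients agree with the stated formula.
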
    
\begin{proof}
We want to compute:
\[
\int (Ax + B)^\sigma (C x + D x^2) \, dx
\]

Let:
\[
u = Ax + B \quad \Rightarrow \quad du = A \, dx, \quad dx = \frac{du}{A}, \quad x = \frac{u - B}{A}
\]

Then:
\[
x = \frac{u - B}{A}, \quad x^2 = \left( \frac{u - B}{A} \right)^2
\Rightarrow
C x + D x^2 = C \cdot \frac{u - B}{A} + D \cdot \frac{(u - B)^2}{A^2}
\]

So:
\[
\int (Ax + B)^\sigma (C x + D x^2) \, dx = \frac{1}{A^2} \int u^\sigma \left[ C(u - B) + \frac{D}{A}(u - B)^2 \right] \, du
\]

Now we expand:
\[
C(u - B) + \frac{D}{A}(u^2 - 2B u + B^2)
= C u - C B + \frac{D}{A} u^2 - \frac{2 D B}{A} u + \frac{D B^2}{A}
\]
Multiplying with \( u^\sigma \) yields
\[
u^\sigma\Big[C(u - B) + \frac{D}{A}(u^2 - 2B u + B^2)\Big]= C u^{\sigma+1} - C B u^\sigma + \frac{D}{A} u^{\sigma+2} - \frac{2D B}{A} u^{\sigma+1} + \frac{D B^2}{A} u^\sigma
\]
We can integrate term by term to get
\begin{align*}
&\int (Ax + B)^\sigma (C x + D x^2) \, dx
\\&\quad=\frac{1}{A^2} \left[
\frac{C}{\sigma+2} u^{\sigma+2}
- \frac{C B}{\sigma+1} u^{\sigma+1}
+ \frac{D}{A(\sigma+3)} u^{\sigma+3}
- \frac{2D B}{A(\sigma+2)} u^{\sigma+2}
+\frac{D B^2}{A(\sigma+1)} u^{\sigma+1}
\right].
\end{align*}
Substituting back with \( u = Ax + B \), we obtain
\begin{align*}
&\int (Ax + B)^\sigma (C x + D x^2) \, dx 
\\&\quad=\frac{1}{A^2} \Bigg[
\frac{C}{\sigma+2}(Ax+B)^{\sigma+2}
- \frac{C B}{\sigma+1}(Ax+B)^{\sigma+1}
+ \frac{D}{A(\sigma+3)}(Ax+B)^{\sigma+3}
- \frac{2D B}{A(\sigma+2)}(Ax+B)^{\sigma+2}
\\&\qquad\qquad+ \frac{D B^2}{A(\sigma+1)}(Ax+B)^{\sigma+1}\Bigg]
\\&=\frac{1}{A^2} \Bigg[
\frac{D}{A(\sigma+3)} (Ax + B)^{\sigma+3}
+ \frac{AC-2DB}{A(\sigma+2)}(Ax + B)^{\sigma+2}
+ \frac{B(DB-AC)}{A(\sigma+1)}(Ax + B)^{\sigma+1}\Bigg],
\end{align*}
- not valid for \( \sigma \ne -3, -2, -1 \).
\end{proof}

\begin{theorem}[Reduction to a First-Order ODE using the Integrating Factor method]
\label{thm: main thm}
Let $\eta(\zeta)$ be a twice-differentiable function satisfying the nonlinear ODE
\begin{equation}
\label{eq: ceta6F0}
(A \eta + B)\eta'' + C \eta + D\eta^2 + E (\eta')^2 = 0.    
\end{equation}
Then the derivative $\eta'$ can be expressed in terms of $\eta$ as follows 
\begin{equation}
\label{eq: ceta7}
(\eta')^2=\frac{-2}{A^3}\Bigg[
\frac{D}{\frac{2E}{A}+2} (A\eta + B)^2
+ \frac{AC-2DB}{\frac{2E}{A}+1}(A\eta + B)+ \frac{B(DB-AC)}{\frac{2E}{A}} - \frac{C_1}{(A\eta+B)^{2E/A}}\Bigg],
\end{equation}
where $C_1$ is a constant so that $\eta=0$ implies $\eta'=0$. More precisely,
\begin{equation}
C_1=\Big(\frac{DB^2}{\frac{2E}{A}+2}
+ \frac{(AC-2DB)B}{\frac{2E}{A}+1}+ \frac{B(DB-AC)}{\frac{2E}{A}}\Big)B^{\frac{2E}{A}}.   
\end{equation}
As a consequence, \eqref{eq: ceta6F0} is a conservative system, with the conserved Hamiltonian given by
\[
H(\eta,\eta')=\frac{1}{2}(\eta')^2+\frac{1}{A^3}\Bigg[
\frac{D}{\frac{2E}{A}+2} (A\eta + B)^2
+ \frac{AC-2DB}{\frac{2E}{A}+1}(A\eta + B)\Bigg]+\frac{C_1}{2(A\eta+B)^{2E/A}}.
\]
\end{theorem}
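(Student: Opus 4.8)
The plan is to treat $(\eta')^2$ as the unknown and use the classical reduction of order for autonomous equations. Setting $w(\eta):=(\eta')^2$ one has $\eta''=\tfrac12\,\dfrac{dw}{d\eta}$ (since $\dfrac{dw}{d\zeta}=2\eta'\eta''=\eta'\dfrac{dw}{d\eta}$), so that \eqref{eq: ceta6F0} turns into the \emph{first-order linear} equation in $w$,
\[
\tfrac12(A\eta+B)\,\frac{dw}{d\eta}+E\,w=-(C\eta+D\eta^2),
\qquad\text{i.e.}\qquad
\frac{dw}{d\eta}+\frac{2E}{A\eta+B}\,w=-\frac{2(C\eta+D\eta^2)}{A\eta+B}.
\]
This is exactly the integrating-factor setting announced before the statement.

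First I would form the integrating factor $\mu(\eta)=\exp\!\big(\int \tfrac{2E}{A\eta+B}\,d\eta\big)=(A\eta+B)^{2E/A}$, so that
\[
\frac{d}{d\eta}\big[(A\eta+B)^{2E/A}\,w\big]=-2\,(C\eta+D\eta^2)\,(A\eta+B)^{2E/A-1}.
\]
Next I would integrate both sides and apply Lemma~\ref{lem: integration} with $\sigma=\tfrac{2E}{A}-1$; then $\sigma+1=\tfrac{2E}{A}$, $\sigma+2=\tfrac{2E}{A}+1$, $\sigma+3=\tfrac{2E}{A}+2$, and the hypotheses $\sigma\ne-3,-2,-1$ become $E/A\notin\{-1,-\tfrac12,0\}$. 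This yields
\[
(A\eta+B)^{2E/A}w=-\frac{2}{A^3}\Big[\tfrac{D}{2E/A+2}(A\eta+B)^{2E/A+2}+\tfrac{AC-2DB}{2E/A+1}(A\eta+B)^{2E/A+1}+\tfrac{B(DB-AC)}{2E/A}(A\eta+B)^{2E/A}\Big]+\widetilde C_1,
\]
and dividing through by $(A\eta+B)^{2E/A}$ --- legitimate on any interval on which $A\eta+B\ne0$, in particular near $\eta=0$ when $B\ne0$ --- gives precisely \eqref{eq: ceta7} once we rename $\widetilde C_1=\tfrac{2}{A^3}C_1$.

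It remains to fix $C_1$ and to extract the Hamiltonian. Imposing the normalization ``$\eta=0\Rightarrow\eta'=0$'' in \eqref{eq: ceta7} and solving the resulting scalar linear equation for $C_1$ produces the stated closed form. Finally, the term $\tfrac{B(DB-AC)}{2E/A}$ inside the bracket is independent of $\eta$; moving it to the other side of \eqref{eq: ceta7} exhibits a quantity of the form $\tfrac12(\eta')^2+\Phi(\eta)=\text{const}$, which is the asserted conserved $H(\eta,\eta')$; equivalently one checks directly that $\dfrac{d}{d\zeta}H(\eta,\eta')=\eta'\big(\Phi'(\eta)+\eta''\big)$ and substitutes \eqref{eq: ceta6F0} to see this vanishes, so \eqref{eq: ceta6F0} is conservative.

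The two integrations and the surrounding algebra are routine. The one delicate point is the bookkeeping of exponents when invoking Lemma~\ref{lem: integration} --- matching $\sigma+1,\sigma+2,\sigma+3$ with $2E/A,\,2E/A+1,\,2E/A+2$ --- together with the accompanying non-degeneracy conditions $A\ne0$ (i.e.\ $\sin 2\theta_0\ne0$) and $E/A\notin\{0,-\tfrac12,-1\}$: these are exactly the values for which the integrating factor or the antiderivative formula collapse and logarithmic terms appear, so those cases would need a separate treatment. Keeping $A\eta+B$ away from zero along the trajectory of interest is the remaining assumption to monitor.
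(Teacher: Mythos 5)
Your proposal is correct and follows essentially the same route as the paper: the substitution $w=(\eta')^2$ with $\eta''=\tfrac12\,dw/d\eta$, the integrating factor $(A\eta+B)^{2E/A}$, and an application of Lemma~\ref{lem: integration} with $\sigma=2E/A-1$, followed by division by the integrating factor. Your additional remarks (the explicit non-degeneracy conditions $A\neq0$, $E/A\notin\{0,-\tfrac12,-1\}$, keeping $A\eta+B\neq0$, the constant bookkeeping, and the direct check that $H$ is conserved) are consistent with, and slightly more careful than, the paper's own argument.
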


\begin{proof}
Define $v(\eta) := \eta'$. Then, using the chain rule,
\[
\eta'' = \frac{dv}{d\zeta} = \frac{dv}{d\eta} \frac{d\eta}{d\zeta} = v \frac{dv}{d\eta}.
\]
Substituting into the original equation we get
\[
(A\eta + B) v \frac{dv}{d\eta} + C\eta + D\eta^2 + E v^2 = 0.
\]
Multiplying this through by $2$ yields:
\[
2(A\eta + B)v \frac{dv}{d\eta} + 2E v^2 + 2C\eta + 2D\eta^2 = 0.
\]
Since $\frac{d}{d\eta}(v^2) = 2v \frac{dv}{d\eta}$, we get:
\[
(A\eta + B) \frac{d}{d\eta}(v^2) + 2E v^2 + 2C\eta + 2D\eta^2 = 0.
\]
Let $w(\eta) = v^2 = (\eta')^2$, then the above equation can be expressed via $w(\eta)$ as follows
\begin{equation}
\label{eq1}
(A\eta + B)\frac{dw}{d\eta} + 2E w + 2C\eta + 2D\eta^2 = 0.
\end{equation}
This is a first-order linear ODE for $w(\eta)$ of the form:
\begin{equation}
\label{eq: w}
\frac{dw}{d\eta} + p(\eta) w = q(\eta), 
\end{equation}
with
\begin{equation}
\label{eq: p and q}
p(\eta) = \frac{2E}{A\eta + B}, \quad q(\eta) = -\frac{2C\eta + 2D\eta^2}{A\eta + B}.    
\end{equation}
From their formulae we deduce that $A\neq 0$, $E\neq 0$. The integrating factor is given by
\[
\mu(\eta) = \exp\left( \int \frac{2E}{A\eta + B} d\eta \right) = (A\eta + B)^{2E/A}.
\]
Multiplying both sides of \eqref{eq1} by $\mu(\eta)$ gives
\[
\frac{d}{d\eta}[\mu(\eta) w] = \mu(\eta) q(\eta).
\]
Letting
\[
\sigma:=\frac{2E}{A}-1.
\]
Integrating both sides and applying Lemma \ref{lem: integration} we obtain
\begin{align*}
\mu(\eta) w&= \int \mu(\eta) q(\eta)\, d\eta + C_1\\
&=-2\int (A\eta + B)^{2E/A-1}(C\eta+D\eta^2)\, d\eta + C_1
\\&=\frac{-2}{A^2} \Bigg[
\frac{D}{A(\sigma+3)} (A\eta + B)^{\sigma+3}
+ \frac{AC-2DB}{A(\sigma+2)}(A\eta + B)^{\sigma+2}+ \frac{B(DB-AC)}{A(\sigma+1)}(A\eta + B)^{\sigma+1}\Bigg]+C_1.
\end{align*}
Thus we get
\begin{align*}
w(\eta) &= (\eta')^2
\\&= \frac{-2}{A^2}\frac{1}{(A\eta+B)^{\sigma+1}} \Bigg[
\frac{D}{A(\sigma+3)} (A\eta + B)^{\sigma+3}
+ \frac{AC-2DB}{A(\sigma+2)}(A\eta + B)^{\sigma+2}+ \frac{B(DB-AC)}{A(\sigma+1)}(A\eta + B)^{\sigma+1}\Bigg]\notag
\\&\qquad + \frac{C_1}{(A\eta+B)^{\sigma+1}}
\\&=\frac{-2}{A^2}\Bigg[
\frac{D}{A(\sigma+3)} (A\eta + B)^2
+ \frac{AC-2DB}{A(\sigma+2)}(A\eta + B)+ \frac{B(DB-AC)}{A(\sigma+1)}\Bigg]+ \frac{C_1}{(A\eta+B)^{\sigma+1}}
\\&=\frac{-2}{A^2}\Bigg[
\frac{D}{2(E+A)} (A\eta + B)^2
+ \frac{AC-2DB}{2E+A}(A\eta + B)+ \frac{B(DB-AC)}{2E}\Bigg]+ \frac{C_1}{(A\eta+B)^{2E/A}}.
\end{align*}
This is the proposed equation \eqref{eq: ceta7}, thus completing the case $A\neq 0$.
\end{proof}

As a heuristic observation, we note that no dissipation is built into the system; it is therefore conservative with kinetic energy temporarily transferred into the potential energy of the linear springs and subsequently recovered.
\begin{theorem}[Equilibrium Points and Stability]
The equilibrium points of the ODE \cite{Jordan2011nonlinear}occur at values of $\eta = \eta_e$ such that $\eta' = 0$ and $\eta'' = 0$. These satisfy the equation:
\[
C\eta + D\eta^2 = 0,
\]
provided $A\eta + B \neq 0$.
\end{theorem}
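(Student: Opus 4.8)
The plan is to characterise equilibrium points as the constant solutions of the ODE and then read off the algebraic condition directly from \eqref{eq: ceta6F0}. First I would rewrite the second-order equation as the planar first-order system
\[
\eta' = p, \qquad p' = -\frac{C\eta + D\eta^2 + E p^2}{A\eta + B},
\]
which is well posed precisely on the region where $A\eta + B \neq 0$. An equilibrium of this flow is a point $(\eta_e, p_e)$ at which both right-hand sides vanish; the first equation forces $p_e = 0$, i.e. $\eta' = 0$, and differentiating the corresponding constant solution gives $\eta'' = 0$ as well. Substituting $\eta' = \eta'' = 0$ into \eqref{eq: ceta6F0}, the terms $(A\eta+B)\eta''$ and $E(\eta')^2$ drop out, leaving $C\eta_e + D\eta_e^2 = 0$, which is exactly the claimed relation. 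The hypothesis $A\eta_e + B \neq 0$ is what guarantees that this vanishing is not spurious: on the line $A\eta+B=0$ the coefficient of the highest derivative degenerates and the point is not an ordinary equilibrium of the planar system. Factoring, $\eta_e(C + D\eta_e) = 0$, so $\eta_e = 0$ or, when $D \neq 0$, $\eta_e = -C/D$.

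For the stability part I would linearise the planar system at $(\eta_e, 0)$. Using $C\eta_e + D\eta_e^2 = 0$, the quotient-rule contribution involving the (vanishing) numerator disappears and the $\partial_p$-derivative of $p'$ is proportional to $p$, hence vanishes at $p=0$; the Jacobian therefore collapses to
\[
J(\eta_e,0) = \begin{pmatrix} 0 & 1 \\[2pt] -\dfrac{C + 2D\eta_e}{A\eta_e + B} & 0 \end{pmatrix},
\]
with eigenvalues $\lambda = \pm\sqrt{-(C+2D\eta_e)/(A\eta_e+B)}$. Thus the equilibrium is a linear centre when $(C+2D\eta_e)/(A\eta_e+B) > 0$ and a saddle when it is negative; at $\eta_e = 0$ this is the sign of $C/B$, and at $\eta_e = -C/D$ the sign of $-C/(B - AC/D)$. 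To upgrade the centre case to genuine nonlinear (Lyapunov) stability I would invoke Theorem \ref{thm: main thm}: the system is conservative with the explicit Hamiltonian $H(\eta,\eta')$ given there, and a centre of the linearisation that is a nondegenerate critical point of $H$ is Lyapunov stable, the level curves of $H$ furnishing the invariant closed orbits encircling it.

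The main obstacle I expect is not the equilibrium equation itself — that is immediate — but making the stability statement clean at the borderline cases: $D = 0$ (only the trivial equilibrium survives), $C = 0$ (the two roots merge into a degenerate equilibrium where $C + 2D\eta_e = 0$ and the linearisation is nilpotent), or an equilibrium sitting on the singular line $A\eta + B = 0$. In those degenerate situations the linear eigenvalue test is inconclusive, and one has to fall back on the Hamiltonian $H$ and examine its local behaviour there — a degenerate minimum or maximum versus an inflection-type critical point — to decide stability. I would therefore carry out the generic eigenvalue analysis first and then dispose of these special cases separately using $H$.
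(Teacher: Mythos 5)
Your proof of the stated result is correct and essentially identical to the paper's: the paper simply substitutes $\eta'=\eta''=0$ into \eqref{eq: ceta6F0}, obtains $C\eta+D\eta^2=0$, factors it into $\eta(C+D\eta)=0$, and remarks that points with $A\eta+B=0$ are singular and need separate treatment; your planar-system reformulation is just a slightly more formal wrapper around the same substitution. The additional stability analysis you sketch goes beyond anything the paper proves (despite the theorem's title, its proof stops at the equilibrium equation), and your linearisation and eigenvalue computation are correct; the only caution is in the final upgrade to nonlinear stability, since the quantity $H$ of Theorem \ref{thm: main thm} contains the fixed constant $C_1$ attached to the orbit through the origin and is therefore not a first integral for all trajectories — for a genuine Lyapunov argument one should instead use the integrated quantity $(A\eta+B)^{2E/A}(\eta')^2-\int(A\eta+B)^{2E/A}q(\eta)\,d\eta$, which is constant along every orbit away from the line $A\eta+B=0$.
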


\begin{proof}
At equilibrium, $\eta' = 0$ and $\eta'' = 0$. Substituting these into the ODE:
\[
(A\eta + B) \cdot 0 + C\eta + D\eta^2 + E \cdot 0 = 0 \Rightarrow C\eta + D\eta^2 = 0.
\]
This simplifies to:
\[
\eta(C + D\eta) = 0,
\]
with solutions:
\[
\eta = 0, \quad \eta = -\frac{C}{D} \quad \text{(if } D \ne 0).
\]
If $A\eta + B = 0$ at a critical point, the equation may be singular due to the vanishing coefficient of $\eta''$. Such cases require special treatment (e.g., matched asymptotic expansions). In fact this is the case at a bifurcation point of the equations but outside the physical regime.
\end{proof}

It may be shown that

\begin{equation}
\frac{2E}{A}=\frac{c^2-1/2}{c^2-1}
\end{equation}
We can consider the LHS of \eqref{eq: ceta7} as a function of $z:=A\eta+B$, where $z\in (-\infty, B)$ (noting that $B>0$). In term of $z$:
\[
H(z):=\frac{-2}{A^2}\Bigg[
\frac{D}{2(E+A)} z^2
+ \frac{AC-2DB}{2E+A}z+ \frac{B(DB-AC)}{2E}\Bigg]+ \frac{C_1}{z^{2E/A}},
\]
We define
\[
\bar{H}=\bar{H}(c, K,\theta_0)=\min_{z\in (-\infty,B)} H(z).
\]
From Theorem \ref{thm: main thm} we obtain the following phase portrait of the reduced system.

\begin{figure}[htbp]
    \centering
    \includegraphics[width=0.7\textwidth,trim=10 10 10 10,clip]{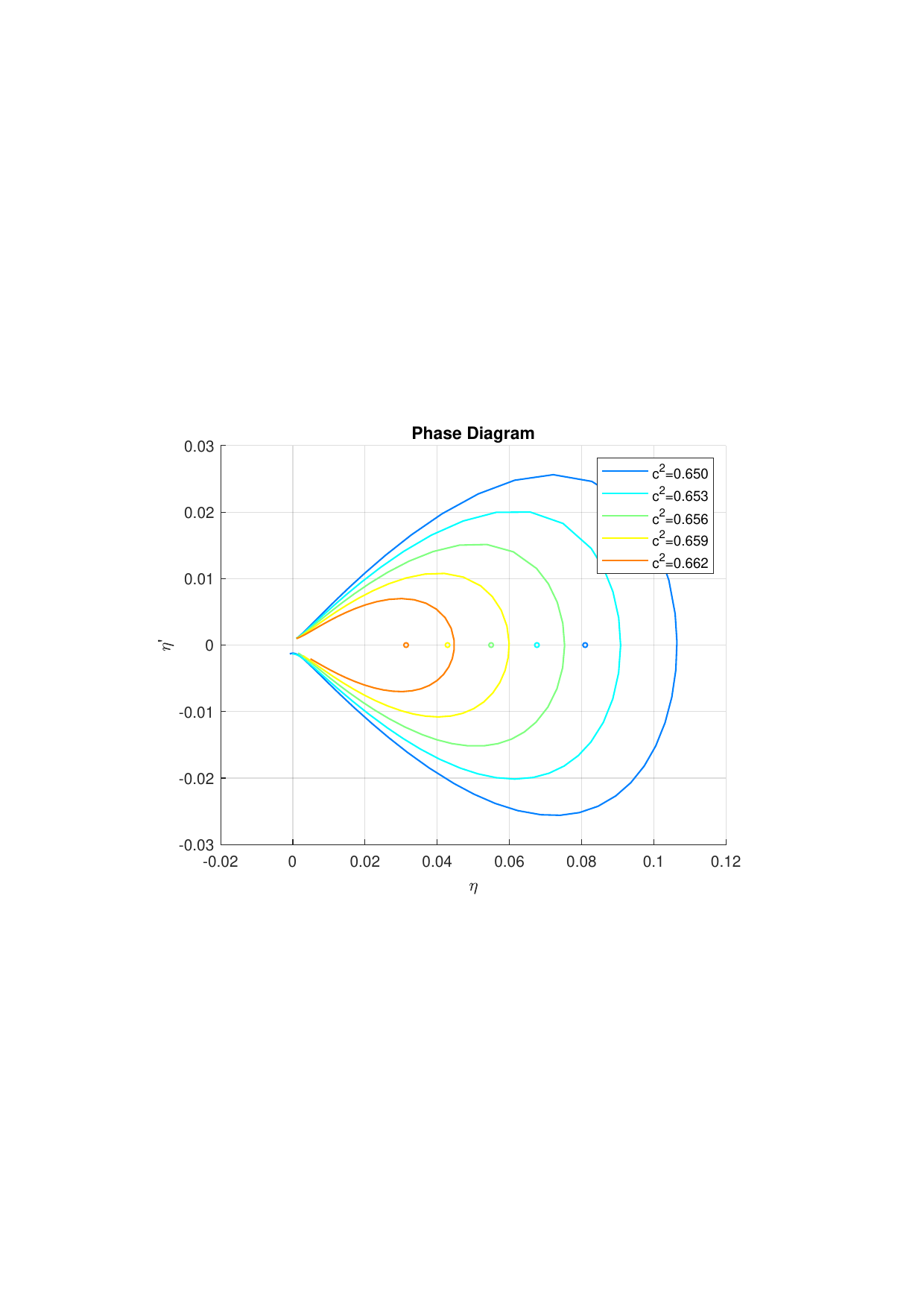}
    \caption{Full numerics}{Homoclinic orbits with centres; obtained using ode45 on coupled first order differential equations.}
    \label{fig 2}
\end{figure}

\begin{theorem}[Phase Portrait of the Reduced System]
The phase portrait of the system is given by:
\[
\left( \frac{d\eta}{d\zeta} \right)^2 = H(A\eta+B),
\]
The trajectories in the phase plane $(\eta, \eta')$ are:
\[
\eta' = \pm \sqrt{H(A\eta+B)}.
\]
Behavior is governed by the sign of $w(\eta)$:
\begin{itemize}
    \item If $H(A\eta+B) > 0$: trajectories exist with nonzero slope.
    \item If $H(A\eta+B) = 0$: critical points or turning points.
    \item If $H(A\eta+B) < 0$: solutions are nonreal — dynamically forbidden regions. In particular, the forbidden region is given by 
    \[
    \{(c,K,\theta_0): \bar{H}(c,K,\theta_0)<0\}.
    \]
\end{itemize}
\end{theorem}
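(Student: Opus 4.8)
The statement to prove is essentially a restatement of what has already been derived in Theorem~\ref{thm: main thm}, packaged as a description of the phase portrait. So the "proof" is largely bookkeeping: translate the first-order relation $(\eta')^2 = H(A\eta+B)$ into trajectory structure, and then identify the forbidden region via the sign of $H$.

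\medskip

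The plan is as follows. First I would invoke Theorem~\ref{thm: main thm} directly: it gives $(\eta')^2 = w(\eta) = H(A\eta+B)$ as an exact first integral of \eqref{eq: ceta6F0}, valid on the domain where $A\eta+B \neq 0$, i.e. $z := A\eta+B$ ranges over $(-\infty,B)$ in the physical regime (using $B>0$, which I would justify from \eqref{eq: coefficients} and the parameter ranges $K$ small, $|c|<1$, $\theta_0$ modest, so that $B = \alpha^2(\cos 2\theta_0 - K) - c^2$ is positive). Taking square roots gives the two branches $\eta' = \pm\sqrt{H(A\eta+B)}$, which are exactly the level curves of the Hamiltonian $H(\eta,\eta')$ from Theorem~\ref{thm: main thm} at the energy level fixed by the normalisation $\eta=0 \Rightarrow \eta'=0$ (this is what the constant $C_1$ enforces). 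So the phase portrait is foliated by these curves.

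\medskip

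Next I would read off the three cases from the sign of $H(A\eta+B)$, treating $\eta'$ as a real-valued quantity. Where $H(A\eta+B)>0$ the relation yields two distinct real values $\pm\sqrt{H}$, so through such a point there pass trajectory arcs of nonzero slope (with the flow direction determined by the sign of $\eta'$ and the usual phase-plane convention $\dot\eta = \eta'$). Where $H(A\eta+B)=0$ we have $\eta'=0$: combined with the ODE \eqref{eq: ceta6F0} evaluated at $\eta'=0$, i.e. $(A\eta+B)\eta'' = -(C\eta + D\eta^2)$, these are either genuine equilibria (when additionally $C\eta+D\eta^2=0$, recovering $\eta=0$ and $\eta=-C/D$ as in the Equilibrium Points theorem) or turning points where the trajectory reverses in $\eta$ — so these points bound the accessible intervals of $\eta$. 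Where $H(A\eta+B)<0$ there is no real $\eta'$, so no trajectory can enter; this is the dynamically forbidden region. Finally, to get the parameter-space characterisation, I would note that $\bar H(c,K,\theta_0) = \min_{z\in(-\infty,B)} H(z)$; if $\bar H \ge 0$ then $H\ge 0$ everywhere on the physical $z$-interval and no forbidden region exists, whereas if $\bar H<0$ the set $\{z : H(z)<0\}$ is nonempty, so the triple $(c,K,\theta_0)$ lies in the forbidden region $\{\bar H(c,K,\theta_0)<0\}$. This establishes the displayed characterisation.

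\medskip

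I expect the only real subtlety — and hence the main obstacle — to be making the domain and sign claims precise: verifying $B>0$ (hence $z<B$ with the power $z^{2E/A}$ and the pole at $z=0$ handled consistently) throughout the physical parameter range, checking that the exponent $2E/A = (c^2-1/2)/(c^2-1)$ does not collide with the excluded values $\sigma \in \{-3,-2,-1\}$ of Lemma~\ref{lem: integration} there, and confirming that $H$ indeed attains its minimum on the half-line $(-\infty,B)$ so that $\bar H$ is well defined (this needs the leading behaviour of $H$ as $z\to-\infty$ and as $z\to 0^-$, where the $C_1/z^{2E/A}$ term may blow up depending on the sign of the exponent). Everything else is a direct transcription of Theorem~\ref{thm: main thm}.
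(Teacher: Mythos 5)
Your proposal is correct and follows essentially the same route as the paper, which offers no separate argument for this theorem and simply reads it off as an immediate consequence of Theorem~\ref{thm: main thm} (``From Theorem~\ref{thm: main thm} we obtain the following phase portrait\ldots''), together with the elementary sign analysis of $(\eta')^2=H(A\eta+B)$. Your additional remarks on verifying $B>0$, the exponent $2E/A$ avoiding the excluded values of Lemma~\ref{lem: integration}, and the well-definedness of $\bar H$ on $(-\infty,B)$ go beyond what the paper checks, but they do not change the approach.
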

\section{Numerical investigations}
\label{sec: numerics}
A necessary condition for the existence of a soliton solution is that  $C<0$, that is
\[
P=\frac{4(1-K)}{1-c^2}\sin(\theta_0)^2-8(K+\sin(\theta_0)^2)<0.
\]
Solving for this condition gives (recalling that $0<c<1$)
\[
c^2<1-\frac{4(1-K)\sin(\theta_0)^2}{8(K+\sin(\theta_0)^2)}\approx 0.671.
\]
Numerical solutions for this equation, using ODE45 in Matlab, are shown in Figure \ref{fig: 12solutions}. We observe that the range of $c^2$ for the existence of cnoidal (Jacobian elliptic function) solutions is
\[
c^2\geq 0.671.
\]
Numerical results indicate that soliton-bearing solutions exist within the approximate range.

\begin{equation*}
    5/8 \lesssim c^2 \lesssim 2/3,
\end{equation*}
which implies the corresponding parameter regime:
\begin{equation*}
    1/3 \lesssim -2E/A \lesssim 1/2.
\end{equation*}

\begin{figure}[htbp]
    \centering
    \begin{minipage}[b]{0.6\textwidth}
        \includegraphics[width=\linewidth]{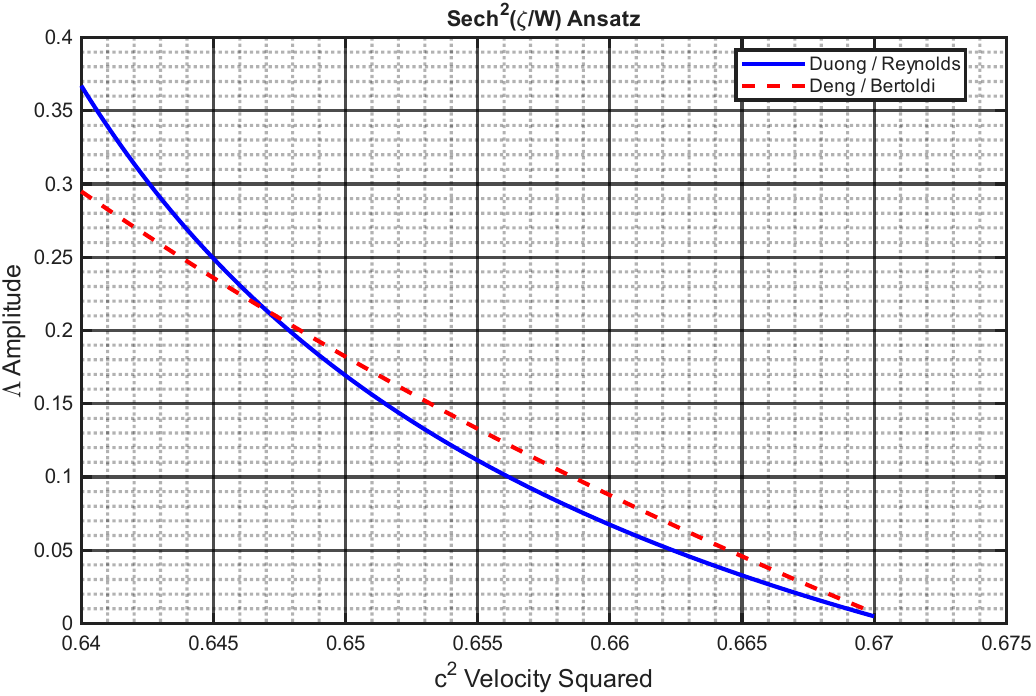}
    \end{minipage}
    \begin{minipage}[b]{0.6\textwidth}
        \includegraphics[width=\linewidth]{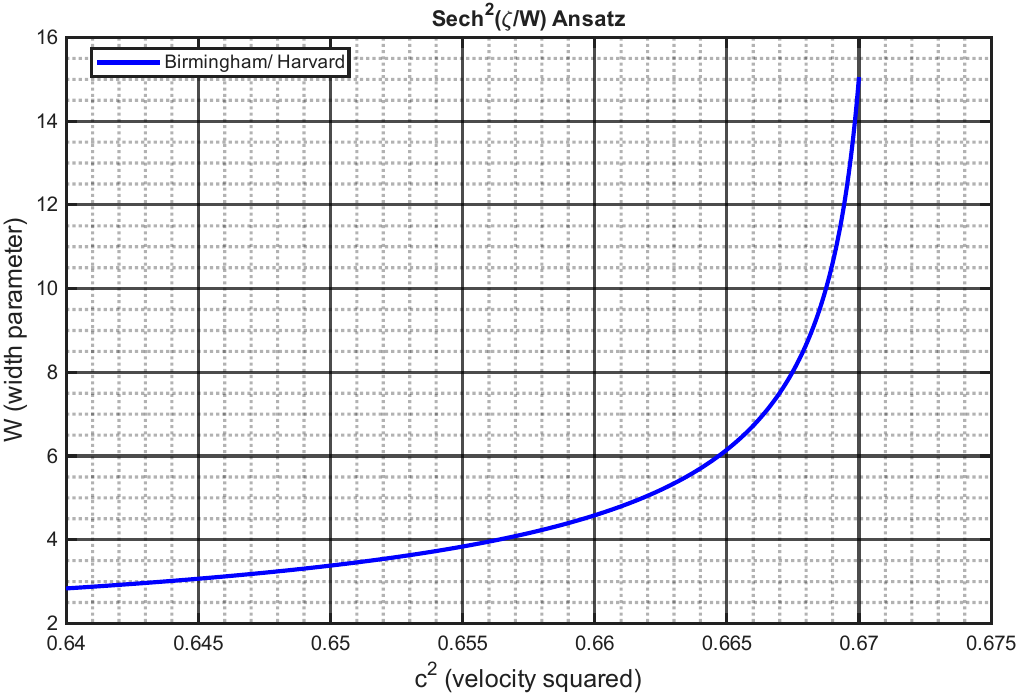}
    \end{minipage}
\end{figure}
    
\begin{figure}[htbp]
    \centering
    \begin{minipage}[b]{0.24\textwidth}
        \includegraphics[width=\linewidth]{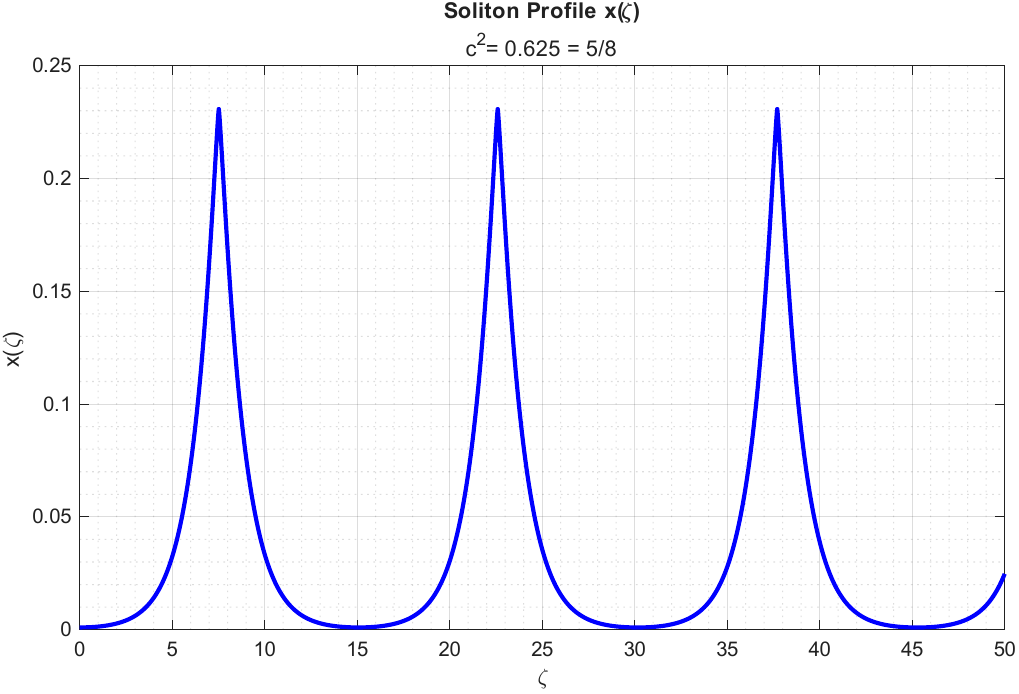}
        \subcaption{$c^2=0.625$}
    \end{minipage}
    \begin{minipage}[b]{0.24\textwidth}
        \includegraphics[width=\linewidth]{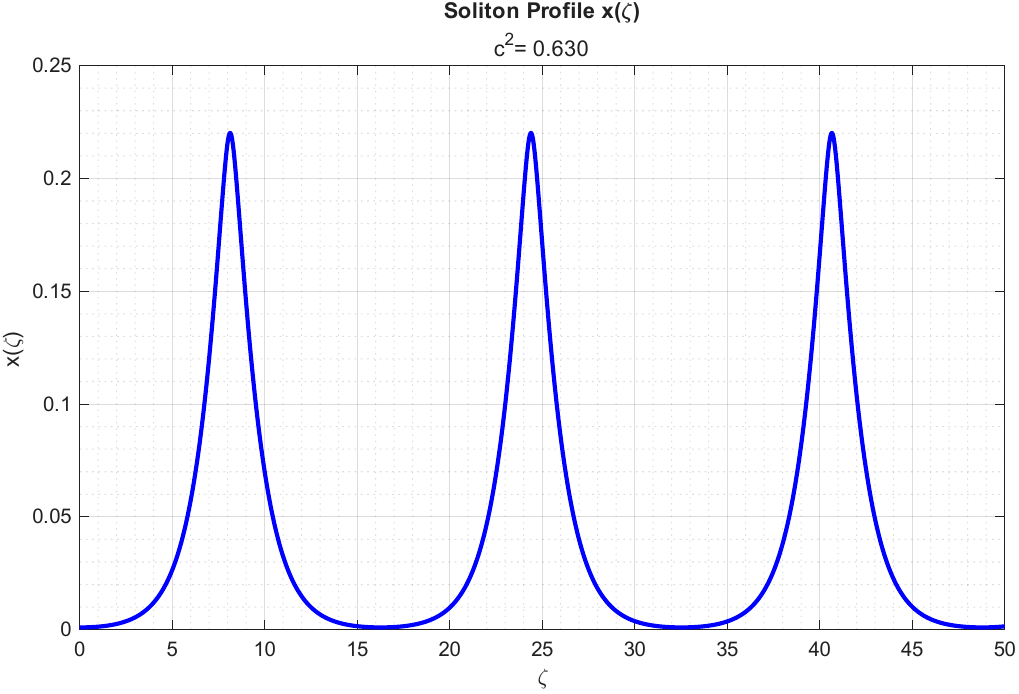}
        \subcaption{$c^2=0.630$}
    \end{minipage}
    \begin{minipage}[b]{0.24\textwidth}
        \includegraphics[width=\linewidth]{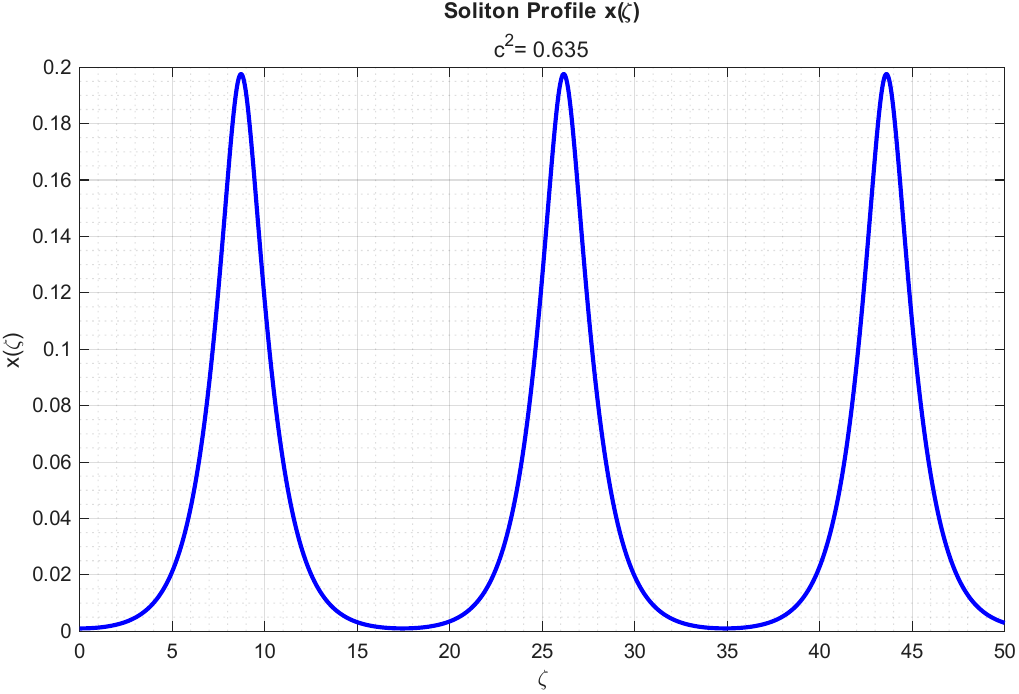}
        \subcaption{$c^2=0.635$}
    \end{minipage}

    \vspace{0.5em}
    
    \begin{minipage}[b]{0.24\textwidth}
        \includegraphics[width=\linewidth]{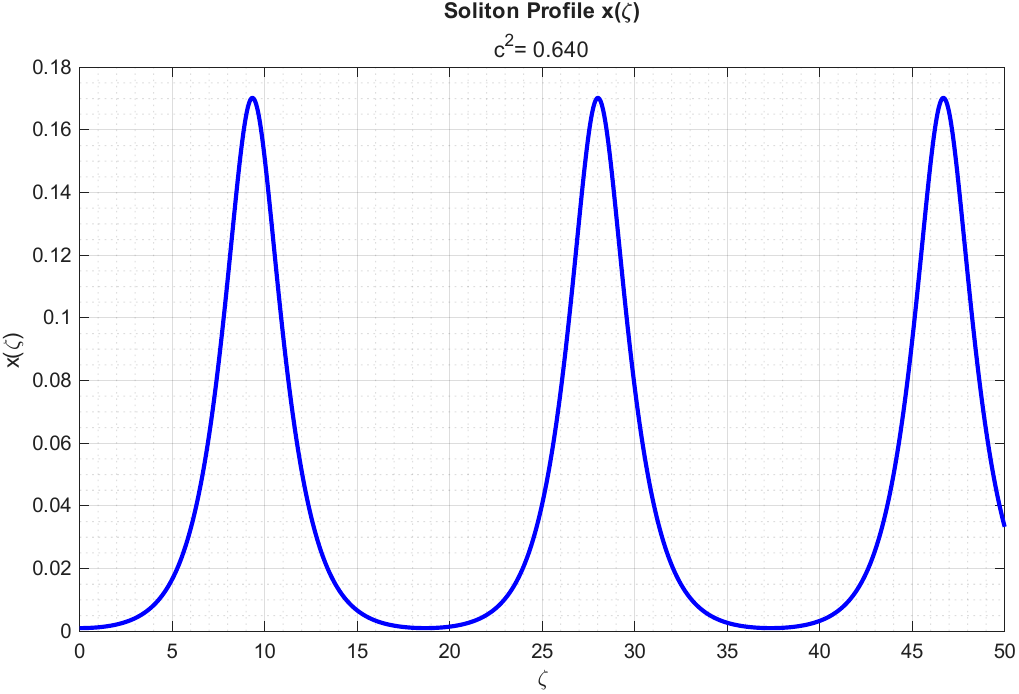}
        \subcaption{$c^2=0.640$}
    \end{minipage}
    \begin{minipage}[b]{0.24\textwidth}
        \includegraphics[width=\linewidth]{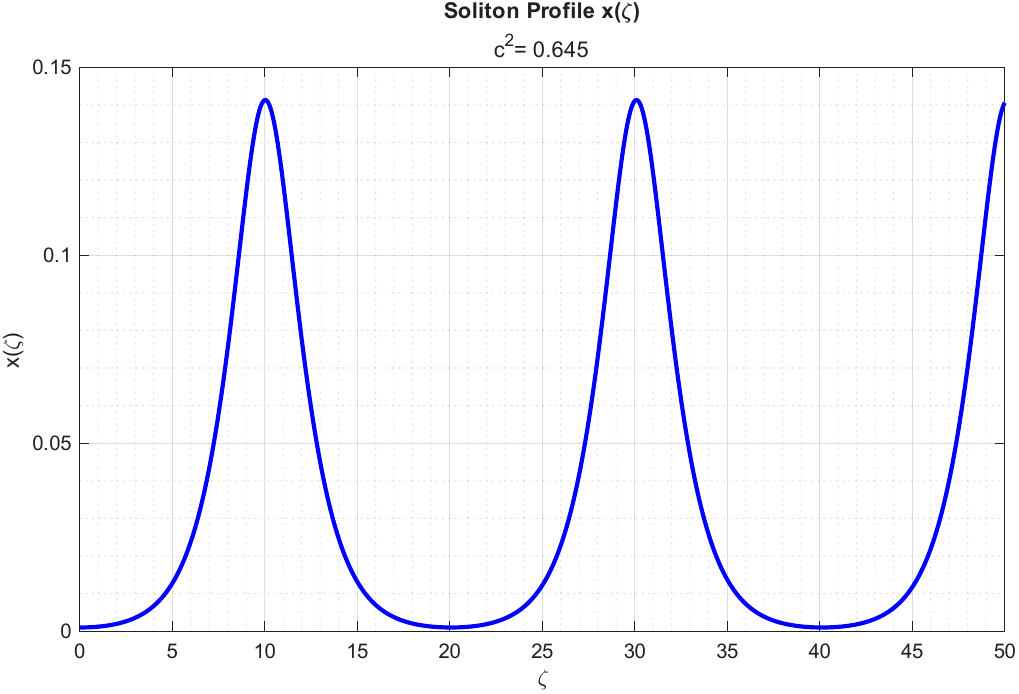}
        \subcaption{$c^2=0.645$}
    \end{minipage}
    \begin{minipage}[b]{0.24\textwidth}
        \includegraphics[width=\linewidth]{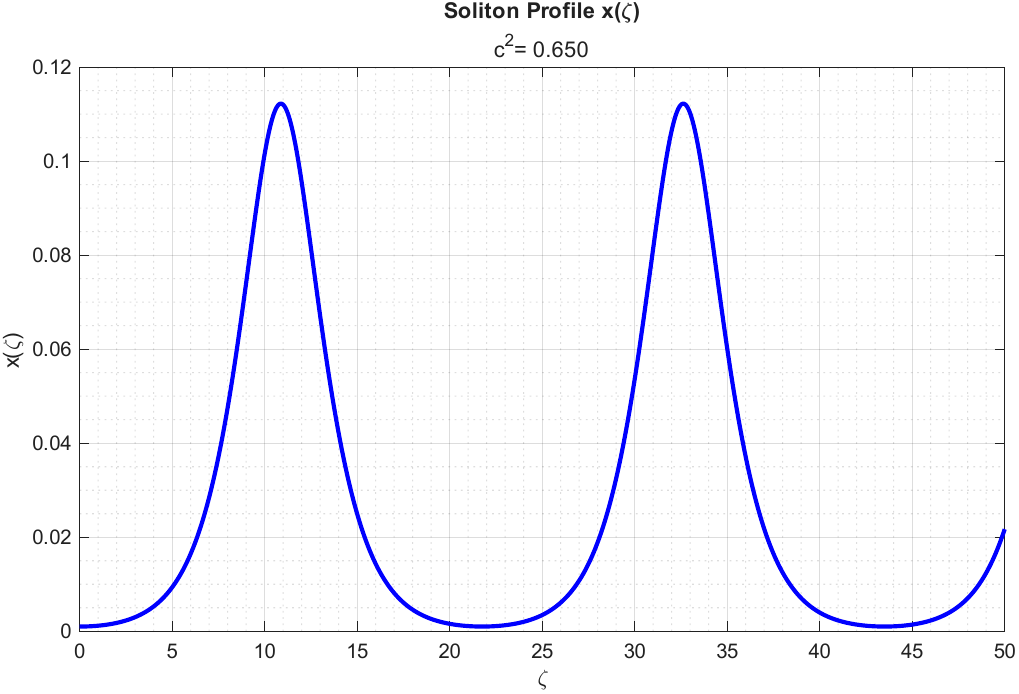}
        \subcaption{$c^2=0.650$}
    \end{minipage}

    \vspace{0.5em}
    
    \begin{minipage}[b]{0.24\textwidth}
        \includegraphics[width=\linewidth]{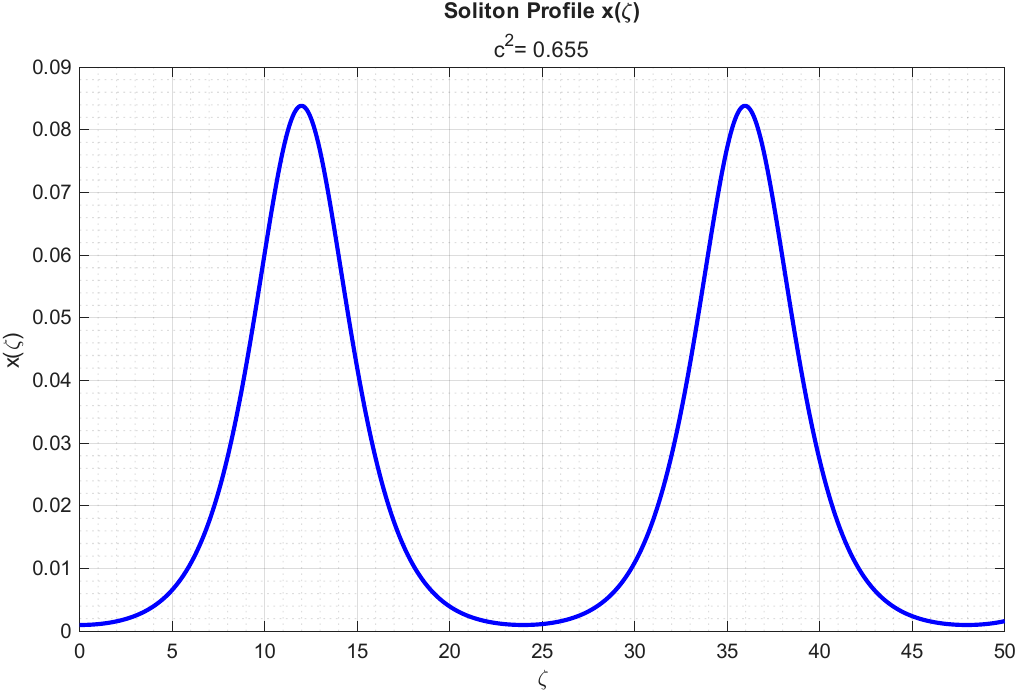}
        \subcaption{$c^2=0.655$}
    \end{minipage}
    \begin{minipage}[b]{0.24\textwidth}
        \includegraphics[width=\linewidth]{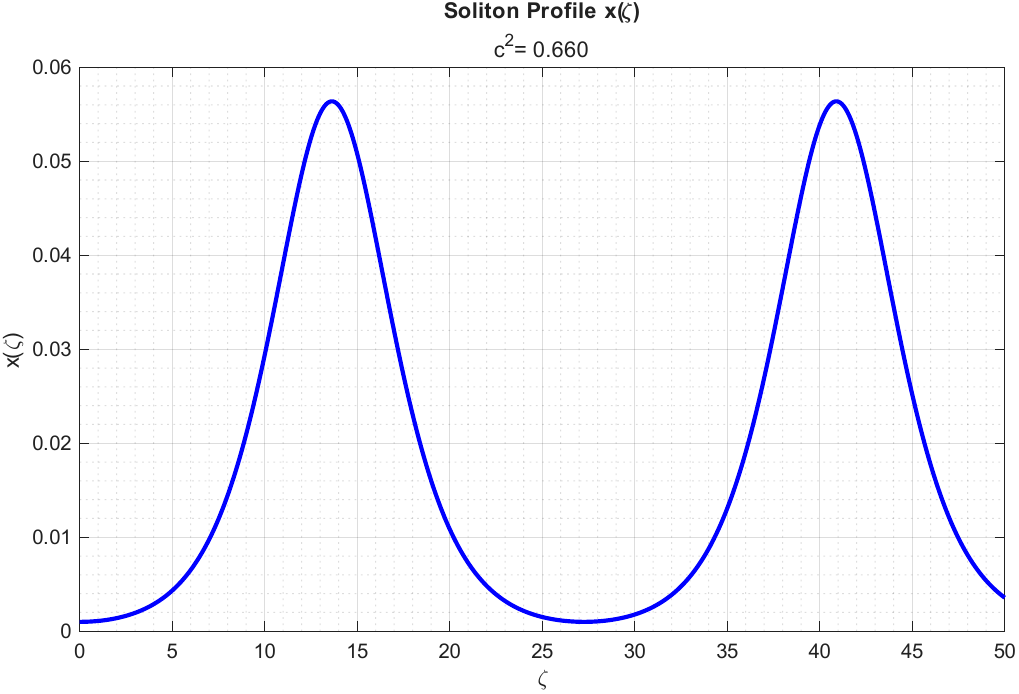}
        \subcaption{$c^2=0.660$}
    \end{minipage}
    \begin{minipage}[b]{0.24\textwidth}
        \includegraphics[width=\linewidth]{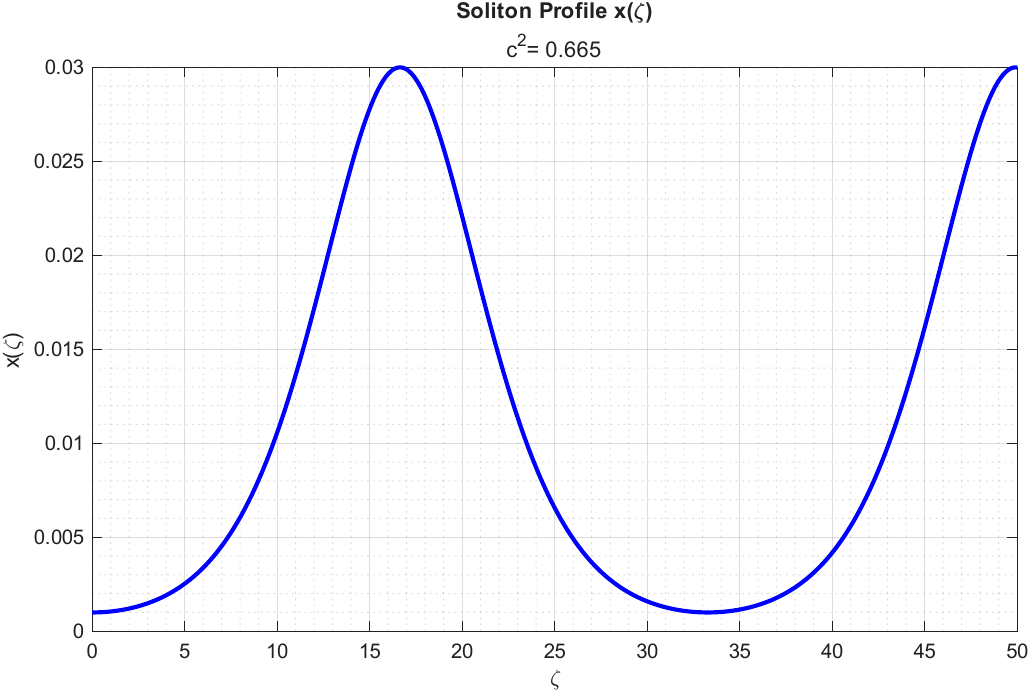}
        \subcaption{$c^2=0.665$}
    \end{minipage}

    \vspace{0.5em}

    \begin{minipage}[b]{0.24\textwidth}
        \includegraphics[width=\linewidth]{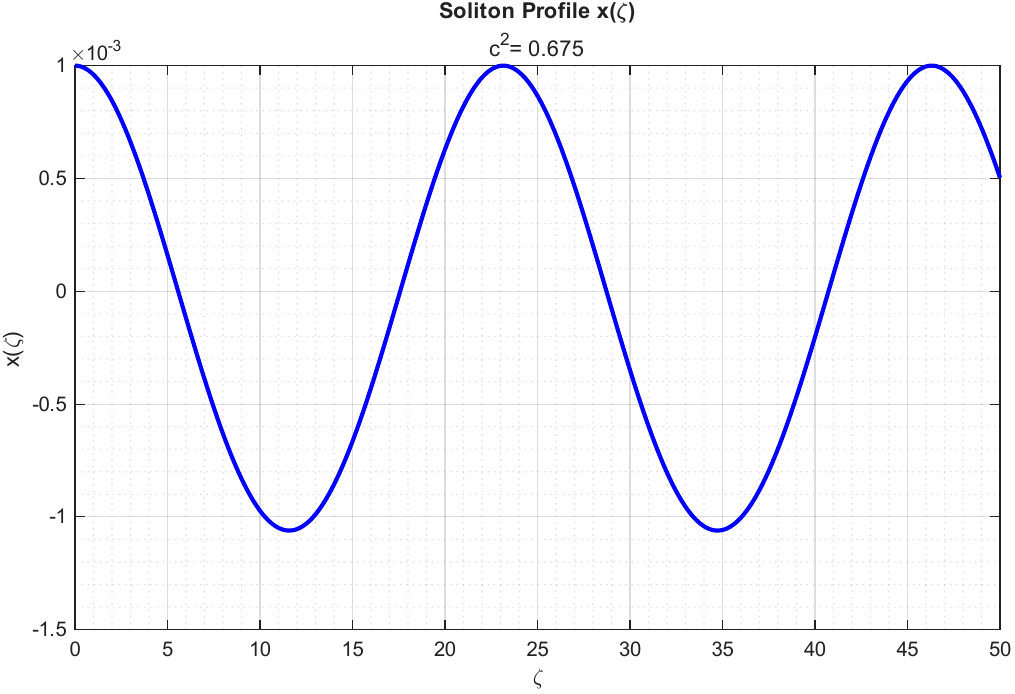}
        \subcaption{$c^2=0.670$}
    \end{minipage}
    \begin{minipage}[b]{0.24\textwidth}
        \includegraphics[width=\linewidth]{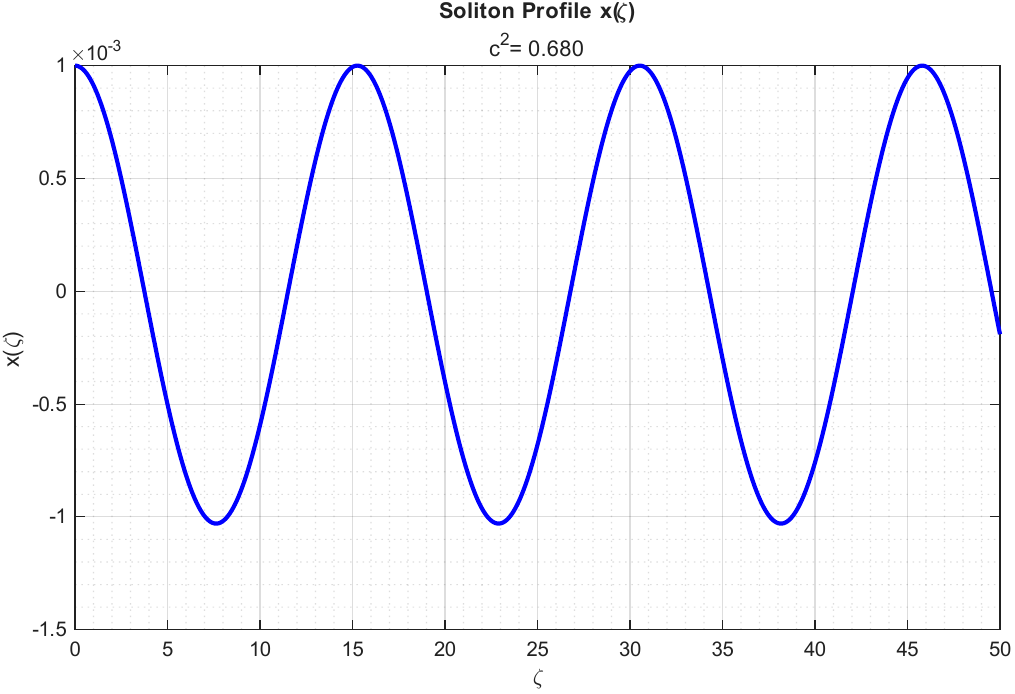}
        \subcaption{$c^2=0.675$}
    \end{minipage}
    \begin{minipage}[b]{0.24\textwidth}
        \includegraphics[width=\linewidth]{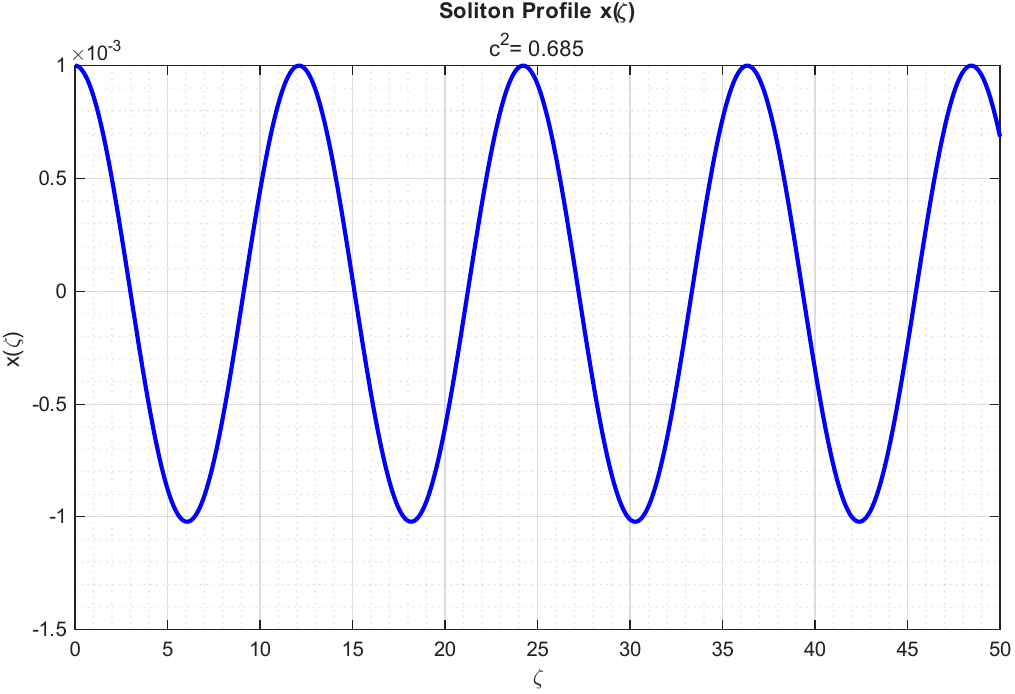}
        \subcaption{$c^2=0.680$}
    \end{minipage}

    \caption{Numerical solutions for various values of $c^2$}
\label{fig: 12solutions}
\end{figure}

The bounds are not exact, but they provide a useful rational approximation that closely brackets the soliton regime. At the lower end of this interval, the solution exhibits a sharply peaked, spike-like profile (a peakon). Zooming into the vicinity of the spike (see Figure \ref{fig: 12solutions}), we find that whilst the second derivative becomes very large, it remains finite. This indicates that the solution is continuous and differentiable, but with very high curvature at the peak - which suggests the presence of a narrow but smooth transition zone, rather than a mathematical cusp or singularity.

It is worth noting that the sharp spike observed near the cut-off $c^2 \approx5/8^{-}$ is a feature of the continuum approximation of the system. In the discrete model from which it is derived, such a narrow, high-curvature structure only has a Full Width Half Maximum of two sites, and therefore cannot be interpreted as a physically meaningful solution.

This indicates that the spike-like solution is more likely an artifact of the approximation, rather than a true feature of the underlying discrete dynamics. The continuum model, whilst analytically valuable, begins to lose fidelity near this limit - especially as the solution width approaches a lattice spacing.\\

\section{A direct comparison with \cite{deng2017elastic}}
\label{sec: comparison}
The soliton bearing equation obtained in \cite{deng2017elastic} can be written:
\begin{equation}
\label{eq: Deng}
    B\eta'' + C\eta + D^{*}\eta^2=0, 
\end{equation}
where
\begin{equation}
    D^{*}=2D/3.
\end{equation}
Recall that ours is
\begin{equation}
(A \eta + B)\eta'' + C \eta + D\eta^2 + E (\eta')^2 = 0,
\end{equation} 
where all the coefficients are as defined in \eqref{eq: coefficients}.
By noticing that \eqref{eq: Deng} is none other than the Klein-Gordon equation, so \cite{deng2017elastic} introduced the ans\"atz:
\begin{equation}
    \eta=\Lambda \mathrm{sech}^2(\zeta/W),
\end{equation}
where the amplitude $\Lambda$ and width $W$ are both functions of $c^2$. Using again this ans\"atz, direct computation gives
\begin{equation}
    \eta'^2=\frac{4}{W^2}\left(\eta^2-\frac{\eta^3}{\Lambda}\right),
\end{equation}
and hence
\begin{equation}
    \eta''=\frac{1}{2}\frac{d}{d\eta}\eta'^2=\frac{2}{W^2}\left(2\eta-\frac{3\eta^2}{\Lambda}\right).
\end{equation}
Comparing like with like between soliton solutions of \cite{deng2017elastic} and ours, we only need balance the coefficients to second order. The first-order term multiplying $\eta$ is the same as in \cite{deng2017elastic}:
\begin{equation}
\label{eq: first}
    \frac{4B}{W^2}+C=0,
\end{equation}
but for second-order term multiplying $\eta^2$, \cite{deng2017elastic} has
\begin{equation}
\label{eq: 2ndorderDeng}
    \frac{-6B}{W^2\Lambda}+D^{*}=0,
\end{equation}
and we have
\begin{equation}
\label{eq: 2ndorder ours}
    \frac{4A}{W^2}-\frac{6B}{W^2\Lambda}+D+\frac{4E}{W^2}=0.
\end{equation}
The first order equation \eqref{eq: first} gives width as a function of $c^2$. To proceed we use the width from the lowest order approximation and substitute it into each subsequent expression for $\Lambda_{n+1}$. So we get a continually refined solution:

\begin{equation*}
    W^2_0=f_0(c^2),\\
\end{equation*}

\begin{equation*}
    \Lambda_{n+1}=g_{n+1}(W^2_0, c^2),
\end{equation*}
for $n \in \mathbb{N}_0$ and in \cite{deng2017elastic}, the approximation scheme gives

\begin{equation*}
    \Lambda_1=\frac{-3C}{2D^{*}},
\end{equation*}
and in ours
\begin{equation*}
    \Lambda_1=\frac{3}{2}\left(\frac{1}{B}(A+E)-\frac{D}{C}\right)^{-1}.
\end{equation*}
We notice that our formulas will reduce to \cite{deng2017elastic} if one ignores $A$ and $E$. The results are compared in Figure 1.

Reading off from the graph at $\Lambda=0.05$, \cite{deng2017elastic} found the numerical solution of the discrete equations gave $c^2=0.6448$, and the continuum model plus $\mathrm{sech}^2$ ans\"atz gave $c^2=0.6646$.\\

On the other hand our continuum model gave $c^2=0.6624$. So whilst the model in \cite{deng2017elastic} compared well to the numerics, with an overshoot of only $3.069\%$, ours was an overshoot by $2.728\%$ - closer by exactly $1/9$. 

The numerical solution of our continuum approximation without assumptions about the shape of the waveform (see Figure 1), leads to $c^2\approx0.6615$ for an amplitude of 0.05. This is very close indeed to the ans\"atz result.

It is evident from our numerical work that cnoidal solutions are going to become important, but these require equations which are fully third-order and will be the topics of future works.

\begin{figure}
    \centering
    \includegraphics[width=0.5\linewidth]{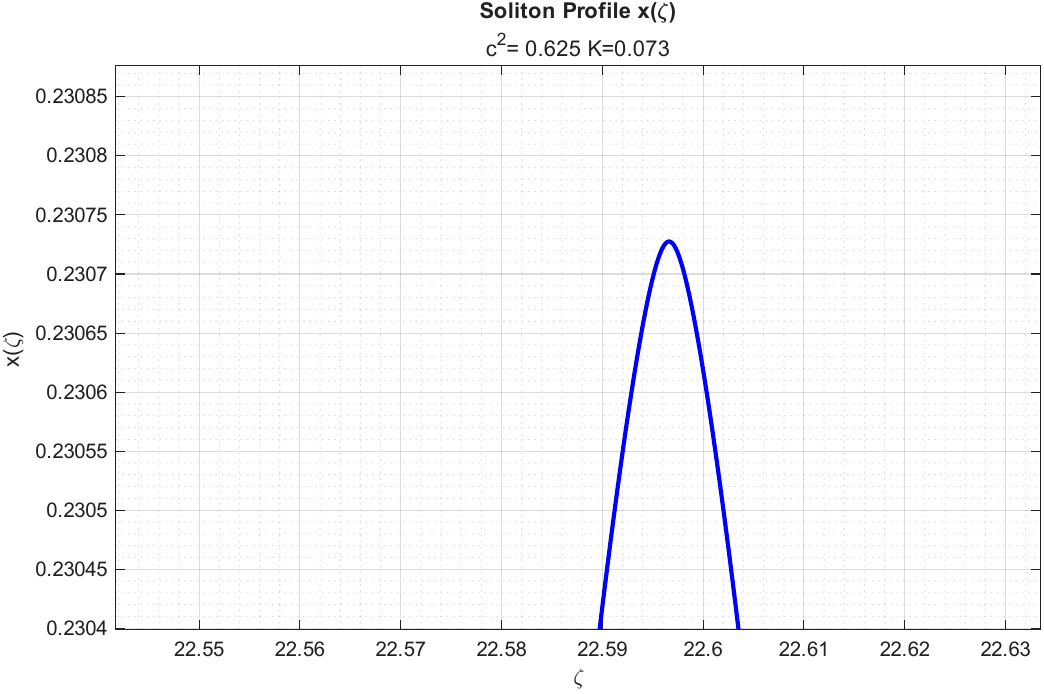}
    \caption{Close-up of a spike at $c^2=0.625$}
    \label{fig:8}
\end{figure}

\section*{Acknowledgment}
M. H. D is funded by an EPSRC Standard Grant EP/Y008561/1.\\

M. J. R has an interest in nonlinear waves going back to the age of around 12. His family used to holiday each year at Silverdale in Lancashire. Not far away was the estuary of the River Kent which at certain times of the year witnessed a small tidal bore (a solitary wave). At that age he didn't understand it much but he gave it a lot of thought. He wishes to dedicate the paper to his late father Revd. Raymond Ernest Reynolds S.Th. M.Phil. (Nottingham) who passed away last year. He passionately believed in a purposeful life, where we \textit{build the is with the bricks of the might have been.} Aged 26 M. J. R began doctoral studies with Prof. E. G. Wilson of the Physics Department at QMUL, on Wilson's 1983 theoretical physics paper on polarons - effectively solitary waves in an inhomogeneous medium. \cite{Wilson83}. Now aged nearly 64 he wishes to bring it full circle back to the child who simply wondered.

\section{Appendix: detailed derivation of the discrete model}
\label{sec: appendix}
Let us summarise the numerics of \cite{deng2017elastic} square-lattice model:
\begin{enumerate}
    \item m (mass) = $2.093g$ 
    \item J (moment of inertia) = $18.11g$ $mm^{2}$
    \item $2^{1/2}l$ (side length) = $8mm$
    \item $\theta_0$ (pre-deformation) = 25$^0$
    \item r (radius of copper cylinder) = $2.38mm$
    \item $k_l$ (longitudinal spring stiffness) = $19,235Nm^{-1}$
    \item $k_\theta$ (torsional spring stiffness) = $4.27$ x $10^{-2}Nm$ $rad^{-1}$
\end{enumerate}

The squares are considered rigid, but are connected by thin and highly deformable ligaments where all the deformation takes place. The system can support elastic vector solitons due to the mode coupling governed by the pre-deformation angle $\theta_0$.

Deng modelled the system by assuming no damping, vertically periodic boundary conditions, free boundary conditions at the right hand side, all waves travelling in the x-direction from left to right and having no DOF in the y-direction (ie a column of squares will have the same horizontal displacement and will rotate by the same angle up to a sign convention), and finally neighbouring squares always rotate in an opposite sense.

I would like to introduce new notation by considering the bank of squares as a matrix. We can say that $u^{i,j}$ = $u^{i+1,j}$, and $\theta^{i,j}$ = $\theta^{i+1,j}$ under a suitable sign convention.

Focussing on the $[i,j]^{th}$ square we can write the equations of motion as follows:

\begin{align*}
    m \ddot{u}^{i,j}=\sum_{p=1}^{4}F_p^{i,j},
\end{align*}

\begin{align*}
    J \ddot{\theta}^{i,j}=\sum_{p=1}^{4}M_p^{i,j}.
\end{align*}

where $F_p^{i,j}$ is the force in the x direction, and $M_p^{i,j}$ is the moment generated about the centre at the $p^{th}$ vertex.

It is convenient to work with vectors relative to the centre of each square: $\mathbf{r}_1(\theta^{i,j})$, $\mathbf{r}_2(\theta^{i,j})$, $\mathbf{r}_3(\theta^{i,j})$, and $\mathbf{r}_4(\theta^{i,j})$. 

Now when thinking about the rotational sign convention we only need to notate a unit cell and the pattern repeats. Using $q$ = 1 or 2 for each rigid square we have

\begin{align*}
    \mathbf{r}_1(\theta^{i,j})=lcos(\theta^{i,j}+\theta_0)\hat{\mathbf{e_x}}+l(-1)^qsin(\theta^{i,j}+\theta_0)\hat{\mathbf{e_y}},
\end{align*}

\begin{align*}
    \mathbf{r}_2(\theta^{i,j})=-l(-1)^qsin(\theta^{i,j}+\theta_0)\hat{\mathbf{e_x}}+lcos(\theta^{i,j}+\theta_0)\hat{\mathbf{e_y}},
\end{align*}

\begin{align*}
    \mathbf{r}_3(\theta^{i,j})=-lcos(\theta^{i,j}+\theta_0)\hat{\mathbf{e_x}}-l(-1)^qsin(\theta^{i,j}+\theta_0)\hat{\mathbf{e_y}},
\end{align*}

\begin{align*}
    \mathbf{r}_4(\theta^{i,j})=l(-1)^qsin(\theta^{i,j}+\theta_0)\hat{\mathbf{e_x}}-lcos(\theta^{i,j}+\theta_0)\hat{\mathbf{e_y}}.
\end{align*}

Now focusing on one ligament connecting two neighbouring vertices, we can make a novel diagrammatic representation to isolate each DOF and the coupling between them. There are four steps of causation from an impulse coming in from the left hand side:

\begin{itemize}
    \item a plane wave in the x-direction causes compression to spread throughout the system, and this will be resisted as in any elastic media
    \item because of the pre-deformation, the compression wave has a component which will cause rotation of the squares
    \item the ligament will bend creating torsion which tries to return the system to its equilibrium orientation
    \item this in turn will have a component in the x-direction which will counteract the compression
\end{itemize}

\begin{figure}
    \centering
    \includegraphics[width=0.5\linewidth]{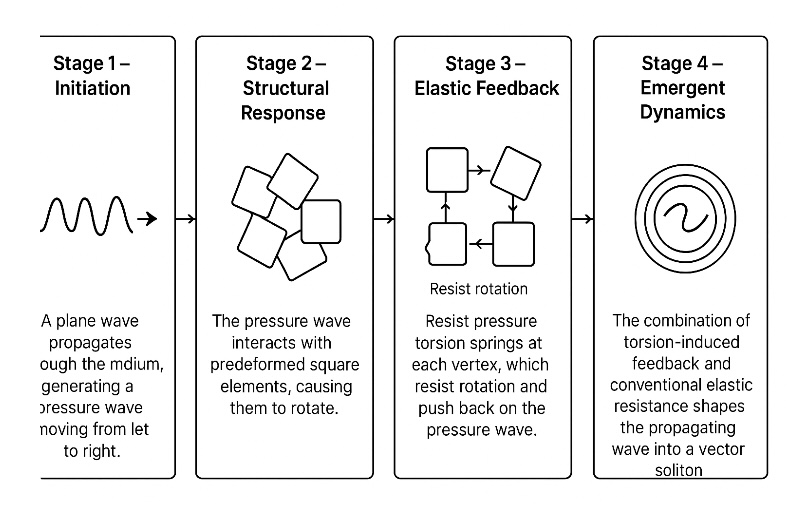}
    \caption{Causal flow chart}
    \label{fig:8}
\end{figure}

We propose to develop this as a Feynman-type diagram so that the situation is clearly visible.

One final thing to mention is the y-dependence of the waves. We have to remember that this is an Auxetic metamaterial in which compression waves are accompanied by a lateral narrowing. At the peak of the soliton the contraction in dimension of the unit cell is:

\begin{equation}
    \Delta=\left(1-\frac{\cos(35^{\circ})}{\cos(25^{\circ})}\right)=0.09617=9.617\%.
\end{equation}
In such a case we need to add linear shear springs as a third element and indeed this has been done several times since. We hope to have added something practical to the discussion around these models.

\newpage

\bibliographystyle{plain}
\bibliography{refs.bib}


\end{document}